\newcolumntype{P}[1]{>{\centering\arraybackslash}p{#1}}
\newcolumntype{M}[1]{>{\centering\arraybackslash}m{#1}}
\newtheorem{assumption}{Assumption}
\newtheorem{lemma}{Lemma}
\newtheorem{theorem}{Theorem}
\newtheorem{proposition}{Proposition}
\newcommand{\va}{\mathrm{Var}}
\newcommand{\e}{\mathrm{E}}
\newcommand{\cov}{\mathrm{Cov}}
\newcommand{\tr}{\mathrm{tr}}
\newcommand\indep{\protect\mathpalette{\protect\independenT}{\perp}}
\def\independenT#1#2{\mathrel{\rlap{$#1#2$}\mkern2mu{#1#2}}}
\DeclareMathOperator*{\argmin}{argmin}
\DeclareMathOperator*{\sgn}{sgn}
\DeclareMathOperator*{\diag}{diag}
\numberwithin{equation}{section}
\begin{document}
	\title{\Large{\textbf{Simultaneous Estimation of Graphical Models by Neighborhood Selection}}}
	\author{\bigskip
		{Ilias Moysidis and Bing Li}\\
		{\em Pennsylvania State University}
	}
	\date{}
	
	\maketitle

	\begin{abstract}
		In many applications concerning statistical graphical models the data originate from several subpopulations that share similarities but have also significant differences. This raises the question of how to estimate several graphical models simultaneously. Compiling all the data together to estimate a single graph would ignore the differences among subpopulations. On the other hand, estimating a graph from each subpopulation separately does not make efficient use of the common structure in the data. We develop a new method for simultaneous estimation of multiple graphical models by estimating the topological neighborhoods of the involved variables under a sparse inducing penalty that takes into account the common structure in the subpopulations. Unlike the existing methods for joint graphical models, our method does not rely on spectral decomposition of large matrices, and is therefore more computationally attractive for estimating large networks. In addition, we develop the asymptotic properties of our method, demonstrate its the numerical complexity, and compare it with several existing methods by simulation. Finally, we apply our method to the estimation of genomic networks for a lung cancer dataset which consists of several subpopulations.
		\bigskip
		
		\noindent
		\textbf{Keywords:} Conditional independence ; Simultaneous neighborhood selection ; Local linear approximation ; Regression ; ADMM ; Computational complexity
	\end{abstract}

	\section{Introduction}
	Graphical models are a useful tool for constructing statistical networks for a wide range of applications such as speech recognition, computer vision and genomics. One recent focus in this research is the simultaneous estimation of multiple graphs from several subpopulations. The current approach to this problem is by modifying the graphical lasso \citep{yuan2007model} to take into account of the common structure in the subpopulations \citep{guo2011joint,danaher2014joint}. However, this involves the spectral decomposition of large matrices and is computationally infeasible for some biological applications. In this paper we propose to generalize the neighborhood selection \citep{meinshausen2006high} to the simultaneous estimation problem, which saves substantial amount of computing time and can be applied to large genetic networks.

	Consider a $p$-dimensional random vector $\mathbf{X}=(X_{1},\ldots,X_{p})^{\intercal}$. The graphical model of $\mathbf{X}$ is represented by an undirected graph $\mathcal{G}=(\mathcal{V},\mathcal{E})$, where $\mathcal{V}=\{1,\ldots,p\}$ is the set of vertices and $\mathcal{E}=\{(i,j):i,j\in\mathcal{V},i\ne j\}$ is the set of edges. Because for an undirected graph $(i,j)$ and $(j,i)$ represent the same edge, we assume $i<j$ for $(i,j)\in\mathcal{E}$ without loss of generality. In a statistical graphical model, the set of edges $\mathcal{E}$ is defined by the relation
	\begin{align}
		\label{rel:1}
		(i,j)\notin\mathcal{E}\quad\Leftrightarrow\quad X_{i}\indep X_{j}|\mathbf{X}_{-\{i,j\}},
	\end{align}	
	where $\mathbf{X}_{-\{i,j\}}$ is the vector $\mathbf{X}$ with its $i$-th and $j$-th components removed, that is $\mathbf{X}_{-\{i,j\}}=\left\{X_{k}:k\in\mathcal{V}\backslash\{i,j\}\right\}$, and the notation $A\indep B|C$ means $A$ and $B$ are conditionally independent given $C$. Of special interest is the case where $\mathbf{X}$ follows a multivariate Gaussian distribution $\mathcal{N}_{p}(\bm{0},\mathbf{\Sigma})$. Let $\mathbf{\Omega}=\mathbf{\Sigma}^{-1}$ be the precision matrix and $\omega_{ij}$ the $(i,j)$-th component of $\mathbf{\Omega}$. Under the Gaussian assumption, because of the relation
	\begin{align*}
		X_{i}\indep X_{j}|\mathbf{X}_{-\{i,j\}}\quad\Leftrightarrow\quad\omega_{ij}=0,
	\end{align*}
	the estimation of the edge set $\mathcal{E}$ is equivalent to the estimation of the positions of the zero components of the precision matrix. \cite{lauritzen1996graphical} and \cite{sinoquet2014probabilistic} give excellent reviews of the theory of statistical graphical models.
	
	The problem of estimating the zero components of the precision matrix dates back to \cite{dempster1972covariance}. He proposed a forward selection procedure in which he starts with the sample covariance matrix and gradually chooses elements of the inverse of the sample covariance matrix to go to zero. The procedure is stopped when adding a new zero component does not change the fit significantly. However, the overall error properties of the step-wise procedure are hard to study. \cite{drton2004model} proposed a joint hypothesis testing under the same significance level for all the pairs of nodes.
	
	\cite{meinshausen2006high} developed the neighborhood selection method that uses an equivalent relationship to \eqref{rel:1} to define the edges of the graph. They applied penalized regressions to find the neighborhood of each node and assemble these neighborhoods together to recover the entire graph. Up until that point, graph and precision matrix estimation were separate problems. \cite{yuan2007model} proposed a penalized likelihood method, called the graphical lasso, to estimate the precision matrix that forces its small components to zero. \cite{rocha2008path} developed a method that allows for joint estimation of all the neighborhoods in the graph that is based on a minimization of a penalized pseudo-likelihood. Other important recent advances include \cite{liu2009nonparanormal,liu2012high} and \cite{xue2012regularized}, which considered the non-Gaussian case.
	
	The simultaneous estimation problem was first considered by \cite{guo2011joint}, which assumes that the data come from different subpopulations that share a common graph structure but also have significant differences. Under such circumstances, to estimate each graph separately would be to waste information in uncovering the common structure. On the other hand, estimating a single graph by merging data from all the subpopulations together would ignore their differences. \cite{guo2011joint} proposed a simultaneous estimation procedure by decomposing each precision matrix into two components: one common to all subpopulations and another specific to each subpopulation. More recently, \cite{danaher2014joint} introduced a different simultaneous estimation method based on the fused graphical lasso  and group graphical lasso. By making some additional assumptions for the common structure, they overcame the problem of the non-convex penalty, resulting in a more computationally efficient method. 
	
	Our proposed method, which we call Simultaneous Neighborhood Selection (SNS), has the following advantages over the above methods. First, compared with \cite{guo2011joint}, instead of trying to adapt graphical lasso to simultaneous estimation, we adapt neighborhood selection to simultaneous estimation by introducing an extra penalty term that enforces the common structure among the subpopulations. We further simplify the penalty by local linear approximation \citep{zou2008one} which reduces the procedure to that of adaptive lasso. This enables us to bypass the large eigenvalue decomposition. Second, even though the method of \cite{danaher2014joint} is very fast computationally, it comes at a cost of making more assumptions about the structure of the graphs. In particular, they assume that the precision matrices of each subpopulation are block-diagonal. Our method does not need to make these additional assumptions.
	
	The rest of the paper is organized as follows. In section 2 we give an overview of the penalized neighborhood selection method. In section 3 we propose our new method for simultaneous estimation via neighborhood selection. In section 4 we develop the algorithm for finding the minimum of the objective function and demonstrate its computational complexity. In section 5 we compare the performance of SNS with existing methods on simulated datasets, both in terms of ROC curves and CPU time. In section 5 we apply SNS on a lung cancer dataset.

	\section{Methodology}\label{methodology}
	\subsection{Neighborhood Selection Method}
	We first give an overview of the neighborhood selection method for estimating a single graph. Suppose $\mathbf{X}=(X_{1},\ldots,X_{p})^{\intercal}$ is a random vector that follows a multivariate Gaussian distribution $\mathcal{N}_{p}(\bm{0},\mathbf{\Sigma})$, with precision matrix $\mathbf{\Omega}=\mathbf{\Sigma}^{-1}$. The neighborhood of a vertex $i\in\mathcal{V}$ is the smallest set $\mathcal{A}\subset\mathcal{V}$ such that $X_{i}\indep \mathbf{X}_{-\mathcal{A}\cup\{i\}}|\mathbf{X}_{\mathcal{A}}$, and is denoted by $\mathrm{ne}(i)$. Let $\mathcal{G}=(\mathcal{V},\mathcal{E})$ be the undirected graph with edge set $\mathcal{E}$ defined by the relation
	\begin{align}
		\label{rel:2}
		(i,j)\in\mathcal{E}\quad\Leftrightarrow\quad j\in\mathrm{ne}(i).
	\end{align}
	It can be shown that the edge sets determined by the relations \eqref{rel:1} and \eqref{rel:2} are identical \citep[Proposition C.5]{lauritzen1996graphical}. Since $\mathbf{X}\sim\mathcal{N}(\mathbf{0},\mathbf{\Omega}^{-1})$, where $\mathbf{\Omega}=(\omega_{ij})$, then for each $j\in\mathcal{V}$, the conditional distribution of $X_{j}|\mathbf{X}_{-j}$ is $\mathcal{N}\left(\sum_{i\ne j}\theta_{ij}X_{i},\sigma^{2}\right)$, where
		\begin{align*}
			\theta_{ij}=-\frac{\omega_{ij}}{\omega_{jj}}\quad,\quad\sigma^{2}=\va\left(X_{j}\right)-\cov\left(X_{j},\mathbf{X}_{-j}\right)\va\left(\mathbf{X}_{-j}\right)^{-1}\cov\left(\mathbf{X}_{-j},X_{j}\right).
		\end{align*}
		From this we can see that, under the assumption $\mathbf{X}\sim\mathcal{N}(\mathbf{0},\mathbf{\Omega}^{-1})$,
	\begin{align*}
		\theta_{ij}=0\quad\Leftrightarrow\quad\omega_{ij}=0\quad\Leftrightarrow\quad X_{i}\indep X_{j}|\mathbf{X}_{-\{i,j\}}    \quad\Leftrightarrow\quad (i,j)\notin\mathcal{E}.
	\end{align*}
	Thus, the edge set $\{(i,j):i<j,\,\omega_{ij}\ne 0\}$ is the same as the set $\{(i,j):i<j,\,\theta_{ij}\ne 0\}$, which is completely determined by the system of neighborhoods $\{\mathrm{ne}(i):i\in\mathcal{V}\}$. In this way, estimating the graph reduces to neighborhood selection.
	
	Another way to represent the statement $X_{j}|\mathbf{X}_{-j}\sim\mathcal{N}\left(\sum_{i\ne j}\theta_{ij}X_{i},\sigma^{2}\right)$ is by the linear regression model
	\begin{align*}
		X_{j}=\sum_{i\ne j}\theta_{ij}X_{i}+\epsilon,
	\end{align*}
	where the error $\epsilon=X_{j}-\sum_{i\ne j}\theta_{ij}X_{i}$ follows a $\mathcal{N}(0,\sigma^{2})$ distribution and is independent of $\sum_{i\ne j}\theta_{ij}X_{i}$. Therefore, finding the graph $\mathcal{G}$ boils down to regressing each variable $X_{j}$ against the remaining $\mathbf{X}_{-j}$ and finding the zero coefficients $\theta_{ij}$.

	The neighborhood selection method for estimating $\mathcal{G}$ proceeds as follows. Define the matrix $\mathbf{\Theta}=(\bm{\theta}_{1},\ldots,\bm{\theta}_{p})$, where $\bm{\theta}_{j}=(\theta_{1j},\ldots,\theta_{pj})^{\intercal}$ such that $\theta_{jj}=0$ and $\theta_{lj}$ is the regression coefficient defined above for $l\ne j$. Suppose we observe an i.i.d. sample $\mathbf{X}_{1},\ldots,\mathbf{X}_{n}$, where $\mathbf{X}_{i}=(X_{i1},\ldots,X_{ip})^{\intercal}$. Let $\mathds{X}$ denote the matrix $\left(\mathbf{X}_{1}\ldots \mathbf{X}_{n}\right)^{\intercal}$ and 
	$\mathds{X}_{j}$ the $j$-th column of $\mathds{X}$. We perform linear regression of $\mathds{X}_{j}$ on $\{\mathds{X}_{1},\ldots,\mathds{X}_{p}\}\backslash\{\mathds{X}_{j}\}$ by minimizing the least squares criterion
	\begin{align*}
		\frac{1}{2n}\left|\left|\mathds{X}_{j}-\mathds{X}\bm{\theta}_{j}\right|\right|_{2}^{2},
	\end{align*}
	where $||\cdot||_{2}$ denotes the $\ell_{2}$ norm.
	To induce sparsity so that we can cover the $p>n$ case, we add an $\ell_{1}$-penalty  to produce the penalized estimator
	\begin{align}\label{indivopt}
		\hat{\bm{\theta}}_{j}=\argmin_{\bm{\theta}\in\mathbb{R}^{p},\theta_{jj}=0}\left(\frac{1}{2n}\left|\left|\mathds{X}_{j}-\mathds{X}\bm{\theta}\right|\right|_{2}^{2}+\lambda||\bm{\theta}||_{1}\right),\quad j=1,\ldots,p,
	\end{align}
	where $\lambda$ is a nonnegative tuning parameter that controls the degree of sparsity. We rewrite the $p$ equations in \eqref{indivopt} into a matrix form as
	\begin{align*}
		\hat{\mathbf{\Theta}}=\argmin_{\mathbf{\Theta}\in\mathbb{R}^{p\times p},\diag(\mathbf{\Theta})=\mathbf{0}}\left(\frac{1}{2n}||\mathds{X}(\mathbf{I}-\mathbf{\Theta})||_{F}^{2}+\lambda||\mathbf{\Theta}||_{1}\right),
	\end{align*}
	where $||\cdot||_{F}$ and $||\cdot||_{1}$ denote the Frobenius norm and the elementwise matrix $\ell_{1}$ norm, respectively, and $\diag(\mathbf{\Theta})=(\theta_{11},\ldots,\theta_{pp})^{\intercal}$.

	Let $\hat{\bm{\theta}}_{j}$ be the solution of \eqref{indivopt} and let $\hat{\mathrm{ne}}(i)=\{i:i\ne j,\,\hat{\theta}_{ij}\ne 0\}$. Because $j\in\hat{\mathrm{ne}}(i)$ does not imply $i\in\hat{\mathrm{ne}}(j)$ and vice versa, \cite{meinshausen2006high} proposed two estimators for the edge set $\mathcal{E}$: a conservative estimator $\hat{\mathcal{E}}=\{(i,j):j\in\hat{\mathrm{ne}}(i)\text{ and }i\in\hat{\mathrm{ne}}(j)\}$, and a liberal estimator $\hat{\mathcal{E}}=\{(i,j):j\in\hat{\mathrm{ne}}(i)\text{ or }i\in\hat{\mathrm{ne}}(j)\}$.

	\subsection{Simultaneous Neighborhood Selection}\label{SNS}	
	We now consider simultaneous estimation of multiple graphs from several subpopulations. We assume that there are $K$ different subpopulations whose graphs, though different, share a set of common edges. For each $k=1,\ldots,K$, suppose we observe an i.i.d. sample $\mathbf{X}_{1}^{(k)},\ldots,\mathbf{X}_{n_{k}}^{(k)}$, where $\mathbf{X}_{i}^{(k)}=\left(X_{i1}^{(k)},\ldots,X_{ip}^{(k)}\right)^{\intercal}$. We assume that $\mathbf{X}_{i}^{(k)}$ is distributed as $\mathcal{N}_{p}(\bm{0},\mathbf{\Sigma}^{(k)})$. Let $\mathds{X}^{(k)}$ denote the matrix $\left(\mathbf{X}_{1}^{(k)},\ldots,\mathbf{X}_{n_{k}}^{(k)}\right)^{\intercal}$ and $\mathbf{\Theta}^{(k)}$ the matrix of coefficients defined in section \ref{methodology} for each subpopulation $k$.

	To take advantage of the information across the subpopulations we reparameterize each $\mathbf{\Theta}^{(k)}$ as $\mathbf{H}\circ\mathbf{\Gamma}^{(k)}$, where $\circ$ is the Hadamard matrix product, $\mathbf{H}=(\eta_{lj})$ is a matrix common to all subpopulations and $\mathbf{\Gamma}^{(k)}=(\gamma_{lj}^{(k)})$ is a matrix specific to subpopulation $k$. To eliminate sign ambiguity we assume $\eta_{lj}\geq0$ for all $l$ and $j$, and to be consistent with the fact that $\theta_{jj}^{(k)}=0$ we set $\eta_{jj}=\gamma_{jj}^{(k)}=0$ for all $j,k$. In this reparameterization, the common factor $\eta_{lj}$ controls the presence of vertex $l$ in the neighborhood of $j$ in all of the graphs, and $\gamma_{lj}^{(k)}$ accommodates the differences in the neighborhood of $j$ between individual graphs. For the simultaneous estimation we propose to minimize
	\begin{align}
		\label{obj:1}
		\frac{1}{2n}\sum_{k=1}^{K}||\mathds{X}^{(k)}(\mathbf{I}-\mathbf{\Theta}^{(k)})||_{F}^{2}+\lambda_{1}||\mathbf{H}||_{1}+\lambda_{2}\sum_{k=1}^{K}||\mathbf{\Gamma}^{(k)}||_{1}
	\end{align}
	over all $\mathbf{H}$ and $\mathbf{\Gamma}^{(k)}$ specified above, where $n=max\{n_{k}:k=1,\ldots,K\}$. The first penalty function penalizes the common factors $\eta_{lj}$ and is responsible for identifying the zeros across all coefficient vectors $\bm{\theta}_{j}^{(1)},\ldots,\bm{\theta}_{j}^{(K)}$. That is, if $\eta_{lj}$ is zero then vertex $l$ is not in the neighborhood of $j$ in all $K$ graphs. The second penalty function penalizes the individual factors $\gamma_{lj}^{(k)}$ and is responsible for identifying the zeros in $\bm{\theta}_{j}^{(k)}$ specific to each individual graph. That is, for a non-zero $\eta_{lj}$ some of the coefficients $\gamma_{lj}^{(1)},\ldots,\gamma_{lj}^{(K)}$ can be zero, which means that $l$ may be absent from the neighborhood of $j$ in some of the $K$ graphs but present in others.
	
	However, the objective function \eqref{obj:1} is difficult to minimize because of its complexity. It involves two groups of variables over which we have to optimize, and two parameters that we have to tune. As will be shown in the Theorem \ref{thm:2}, \eqref{obj:1} is equivalent to the much simpler form
		\begin{align}
			\label{obj:3}
			\frac{1}{2n}\sum_{k=1}^{K}||\mathds{X}^{(k)}(\mathbf{I}-\mathbf{\Theta}^{(k)})||_{F}^{2}+2(\lambda_{1}\lambda_{2})^{1/2}\sum_{l\ne j}\left(\sum_{k=1}^{K}|\theta_{lj}^{(k)}|\right)^{1/2}.
		\end{align}
		Let $\mathbf{\Theta}=(\mathbf{\Theta}^{(1)},\ldots,\mathbf{\Theta}^{(K)})$ and $\mathbf{\Gamma}=(\mathbf{\Gamma}^{(1)},\ldots,\mathbf{\Gamma}^{(K)})$. The proof of the following theorem can be found in the supplementary material.
		\medskip
		\begin{theorem}\label{thm:2}
			If $(\hat{\mathbf{H}},\hat{\mathbf{\Gamma}})$ is a local minimizer of \eqref{obj:1}, then there exists a local minimizer $\hat{\mathbf{\Theta}}$ of \eqref{obj:3} such that $\hat{\mathbf{\Theta}}^{(k)}=\hat{\mathbf{H}}\circ\hat{\mathbf{\Gamma}}^{(k)}$ for all $k$. Conversely, if $\hat{\mathbf{\Theta}}$ is a local minimizer of \eqref{obj:3}, then there exists a local minimizer $(\hat{\mathbf{H}},\hat{\mathbf{\Gamma}})$ of \eqref{obj:1} such that $\hat{\mathbf{H}}\circ\hat{\mathbf{\Gamma}}^{(k)}=\hat{\mathbf{\Theta}}^{(k)}$ for all $k$.
	\end{theorem}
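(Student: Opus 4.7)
The plan is to reduce the theorem to a pointwise AM-GM identity and then transfer local optimality between the two parameterizations via continuity. The quadratic loss in \eqref{obj:1} depends only on the Hadamard products $\mathbf{\Theta}^{(k)}=\mathbf{H}\circ\mathbf{\Gamma}^{(k)}$, and both penalties decouple over off-diagonal entries $(l,j)$. Fixing such an entry and writing $S_{lj}=\sum_k|\theta_{lj}^{(k)}|$, one minimizes $\lambda_1\eta_{lj}+\lambda_2\sum_k|\gamma_{lj}^{(k)}|$ subject to $\eta_{lj}\gamma_{lj}^{(k)}=\theta_{lj}^{(k)}$ and $\eta_{lj}\geq 0$. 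When $S_{lj}>0$ this reduces to minimizing $\lambda_1\eta+\lambda_2 S_{lj}/\eta$ over $\eta>0$, whose AM-GM minimum is $2\sqrt{\lambda_1\lambda_2 S_{lj}}$ at $\eta_{lj}^{\star}=\sqrt{\lambda_2 S_{lj}/\lambda_1}$, $\gamma_{lj}^{\star,(k)}=\theta_{lj}^{(k)}/\eta_{lj}^{\star}$; when $S_{lj}=0$ the minimum $0$ is attained at $\eta_{lj}=0$, $\gamma_{lj}^{(k)}=0$. Summing over $l\ne j$ identifies the penalty of \eqref{obj:3} with the fiberwise minimum of the penalty of \eqref{obj:1}, and in particular yields the pointwise lower bound $F_{2}(\mathbf{\Theta})\leq F_{1}(\mathbf{H},\mathbf{\Gamma})$ whenever $\mathbf{\Theta}^{(k)}=\mathbf{H}\circ\mathbf{\Gamma}^{(k)}$, with equality when $(\mathbf{H},\mathbf{\Gamma})$ is the entrywise AM-GM optimizer of $\mathbf{\Theta}$.

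For the forward direction, let $(\hat{\mathbf{H}},\hat{\mathbf{\Gamma}})$ be a local minimum of \eqref{obj:1} and set $\hat{\mathbf{\Theta}}^{(k)}=\hat{\mathbf{H}}\circ\hat{\mathbf{\Gamma}}^{(k)}$. I would first show that $(\hat{\mathbf{H}},\hat{\mathbf{\Gamma}})$ is already the entrywise AM-GM optimizer of $\hat{\mathbf{\Theta}}$. For $(l,j)$ with $\hat{S}_{lj}>0$, the perturbation $(\hat{\eta}_{lj},\hat{\gamma}_{lj}^{(k)})\mapsto(\hat{\eta}_{lj}+\delta,\hat{\theta}_{lj}^{(k)}/(\hat{\eta}_{lj}+\delta))$ keeps $\hat{\mathbf{\Theta}}$ fixed and is small for small $\delta$, so local optimality forces $\hat{\eta}_{lj}=\eta_{lj}^{\star}$; for $(l,j)$ with $\hat{S}_{lj}=0$, any nonzero $\hat{\eta}_{lj}$ or $\hat{\gamma}_{lj}^{(k)}$ could be reduced to zero without changing $\hat{\mathbf{\Theta}}$, strictly decreasing the penalty, so all these entries must vanish. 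Hence $F_{1}(\hat{\mathbf{H}},\hat{\mathbf{\Gamma}})=F_{2}(\hat{\mathbf{\Theta}})$. Now take any sequence $\mathbf{\Theta}^{(n)}\to\hat{\mathbf{\Theta}}$ and let $(\mathbf{H}^{(n)},\mathbf{\Gamma}^{(n)})$ be the entrywise AM-GM optimizers of $\mathbf{\Theta}^{(n)}$. On entries with $\hat{S}_{lj}>0$ these converge by continuity; on entries with $\hat{S}_{lj}=0$, the bounds $\eta_{lj}^{(n)}=\sqrt{\lambda_{2}S_{lj}^{(n)}/\lambda_{1}}$ and $|\gamma_{lj}^{(n),(k)}|\leq\sqrt{\lambda_{1}S_{lj}^{(n)}/\lambda_{2}}$ both vanish. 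Thus $(\mathbf{H}^{(n)},\mathbf{\Gamma}^{(n)})\to(\hat{\mathbf{H}},\hat{\mathbf{\Gamma}})$, and local optimality gives $F_{2}(\mathbf{\Theta}^{(n)})=F_{1}(\mathbf{H}^{(n)},\mathbf{\Gamma}^{(n)})\geq F_{1}(\hat{\mathbf{H}},\hat{\mathbf{\Gamma}})=F_{2}(\hat{\mathbf{\Theta}})$ for $n$ large.

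For the converse, let $\hat{\mathbf{\Theta}}$ be a local minimum of \eqref{obj:3} and take $(\hat{\mathbf{H}},\hat{\mathbf{\Gamma}})$ to be the entrywise AM-GM optimizers of $\hat{\mathbf{\Theta}}$, so that $F_{1}(\hat{\mathbf{H}},\hat{\mathbf{\Gamma}})=F_{2}(\hat{\mathbf{\Theta}})$. If $(\mathbf{H}^{(n)},\mathbf{\Gamma}^{(n)})\to(\hat{\mathbf{H}},\hat{\mathbf{\Gamma}})$ with $F_{1}(\mathbf{H}^{(n)},\mathbf{\Gamma}^{(n)})<F_{1}(\hat{\mathbf{H}},\hat{\mathbf{\Gamma}})$, then $\mathbf{\Theta}^{(n),(k)}:=\mathbf{H}^{(n)}\circ\mathbf{\Gamma}^{(n),(k)}\to\hat{\mathbf{\Theta}}$ by continuity of the Hadamard product, and the pointwise lower bound gives $F_{2}(\mathbf{\Theta}^{(n)})\leq F_{1}(\mathbf{H}^{(n)},\mathbf{\Gamma}^{(n)})<F_{2}(\hat{\mathbf{\Theta}})$, contradicting the local optimality of $\hat{\mathbf{\Theta}}$.

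The main obstacle is the degenerate case $\hat{\eta}_{lj}=0$, where the reparameterization fails to be locally invertible (the fiber over $\hat{\mathbf{\Theta}}$ contains a full half-line of $\gamma$ values). Handling it requires both the rigidity argument pinning down $\hat{\eta}_{lj}=\hat{\gamma}_{lj}^{(k)}=0$ at any local minimum of \eqref{obj:1}, and the quantitative $O(\sqrt{S_{lj}^{(n)}})$ bound on the reconstructed $(\eta_{lj}^{(n)},\gamma_{lj}^{(n),(k)})$ that keeps the constructed sequence near $(\hat{\mathbf{H}},\hat{\mathbf{\Gamma}})$; once these are in hand, the remaining manipulations are routine.
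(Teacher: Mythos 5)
Your proof is correct and follows essentially the same route as the paper: your fiberwise AM-GM minimization over $\{(\mathbf{H},\mathbf{\Gamma}):\mathbf{H}\circ\mathbf{\Gamma}^{(k)}=\mathbf{\Theta}^{(k)}\}$ together with the rigidity of the factorization at a local minimizer is exactly the content of the paper's Lemma \ref{lem:2}, and your quantitative $O(\sqrt{S_{lj}})$ control of the reconstructed factors near degenerate entries plays the role of the paper's Lemma \ref{lem:3} in verifying that the constructed perturbations stay in the relevant neighborhood. The only organizational difference is that you work with \eqref{obj:1} directly, whereas the paper first rescales to the intermediate objective \eqref{obj:2} via its Lemma \ref{lem:1} before running the same argument.
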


	\section{Computation}
	\subsection{Penalty linearization}
	Writing $2(\lambda_{1}\lambda_{2})^{1/2}$ as $\lambda$, the objective function \eqref{obj:3} becomes
	\begin{align}\label{obj:4}
		\frac{1}{2n}\sum_{k=1}^{K}||\mathds{X}^{(k)}(\mathbf{I}-\mathbf{\Theta}^{(k)})||_{F}^{2}+\lambda\sum_{l\ne j}\left(\sum_{k=1}^{K}|\theta_{lj}^{(k)}|\right)^{1/2}.
	\end{align}
	Due to the presence of the square root in the penalty function, \eqref{obj:4} is not convex. To tackle this issue we approximate \eqref{obj:4} by using the Local Linear Approximation (LLA) method developed in \cite{zou2008one}, which proceeds as follows. Given an initial value $\hat{\mathbf{\Theta}}_{(0)}=\left(\hat{\mathbf{\Theta}}_{(0)}^{(1)},\ldots,\hat{\mathbf{\Theta}}_{(0)}^{(K)}\right)$ that is close to the true value, we locally approximate the penalty function by a linear function
	\begin{align*}
		\left(\sum_{k=1}^{K}|\theta_{lj}^{(k)}|\right)^{1/2}\approx \left(\sum_{k=1}^{K}|\hat{\theta}_{(0),lj}^{(k)}|\right)^{1/2}+\tau_{(0),lj}\left(\sum_{k=1}^{k}|\theta_{lj}^{(k)}|-\sum_{k=1}^{k}|\hat{\theta}_{(0),lj}^{(k)}|\right),
	\end{align*}
	where $\tau_{(0),lj}=2^{-1}\left(\sum_{k=1}^{K}|\hat{\theta}_{(0),lj}^{(k)}|\right)^{-1/2}$. Then, at the $t$-th iteration of the LLA algorithm, problem \eqref{obj:4} is decomposed into $K$ individual optimization problems
	\begin{align}\label{obj:5}
		\hat{\mathbf{\Theta}}_{(t)}^{(k)}=\argmin_{\mathbf{\Theta}\in\mathbb{R}^{p\times p},\diag(\mathbf{\Theta})=\mathbf{0}}\frac{1}{2n}||\mathds{X}^{(k)}(\mathbf{I}-\mathbf{\Theta})||_{F}^{2}+\lambda\sum_{l\ne j}\tau_{(t-1),lj}|\theta_{lj}|,
	\end{align}
	where $\tau_{(t-1),lj}$ is defined as above for $\hat{\mathbf{\Theta}}_{(t-1)}$ and $l\ne j$. In the asymptotics sections, it will be shown that if the initial estimator and the data satisfy certain conditions, then with only one iteration we can get an estimator with the oracle property.

	\subsection{ADMM algorithm for optimization}
	We employ the Alternating Direction Method of Multipliers \citep[ADMM;][]{boyd2011distributed} to solve the optimization problem \eqref{obj:5} for each subpopulation. Since this procedure is the same for all subpopulations $k$ and all LLA iterations $t$, we omit the subscript $(t)$ and the superscript $(k)$ in this section. The primal problem is given by
	\begin{equation}\label{primal1}
		\begin{split}
			&\text{minimize}\quad \frac{1}{2n}||\mathds{X}(\mathbf{I}-\mathbf{\Theta})||_{F}^{2}+\lambda\sum_{l\ne j}\tau_{lj}|\theta_{lj}|\\
			&\text{subject to}\quad\mathbf{\Theta}\in\mathbb{R}^{p\times p},\quad\diag(\mathbf{\Theta})=\mathbf{0}.
		\end{split}
	\end{equation}	
	To free ourselves of the zero diagonal constraint we reformulate the problem into an equivalent form. Let $\mathds{X}_{-j}$ denote the matrix $\mathds{X}$ with its $j$-th column removed, $\boldsymbol{\tau}_{j}$ denote the vector of weights $(\tau_{1j},\ldots,\tau_{pj})^{\intercal}$, and $\boldsymbol{\theta}_{j,-j},\boldsymbol{\tau}_{j,-j}$ the vectors $\boldsymbol{\theta}_{j},\boldsymbol{\tau}_{j}$ with their $j$-th elements removed. Define
	\begin{align*}
		\mathds{Y}=\left(
		\begin{array}{c}
			\mathds{X}_{1}\\
			\vdots\\
			\mathds{X}_{p}
		\end{array}
		\right),\quad
		\mathds{Z}=\left(
		\begin{array}{ccc}
			\mathds{X}_{-1}&\ldots&\mathbf{0}\\
			\vdots&\ddots&\vdots\\
			\mathbf{0}&\ldots&\mathds{X}_{-p}
		\end{array}
		\right),\quad
		\mathbf{v}=\left(
		\begin{array}{c}
			\boldsymbol{\theta}_{1,-1}\\
			\vdots\\
			\boldsymbol{\theta}_{p,-p}
		\end{array}
		\right),\quad
		\mathbf{w}=\left(
		\begin{array}{c}
			\boldsymbol{\tau}_{1,-1}\\
			\vdots\\
			\boldsymbol{\tau}_{p,-p}
		\end{array}
		\right).
	\end{align*}
	Then, the primal problem in \eqref{primal1} is equivalent to
	\begin{equation}\label{primal2}
		\begin{split}
			&\text{minimize}\quad\frac{1}{2n}||\mathds{Y}-\mathds{Z}\mathbf{v}||_{2}^{2}+\lambda\sum_{l=1}^{p(p-1)} w_{l}|v_{l}|\\
			&\text{subject to}\quad\mathbf{v}\in\mathbb{R}^{p(p-1)}.
		\end{split}
	\end{equation}	
	The dual problem of \eqref{primal2} is given by
	\begin{equation}\label{dual}
		\begin{split}
			&\text{minimize}\quad\frac{1}{2n}||\mathds{Y}-\mathds{Z}\mathbf{v}||_{2}^{2}+\lambda\sum_{l=1}^{p(p-1)} w_{l}|r_{l}|\\
			&\text{subject to}\quad\mathbf{v},\mathbf{r}\in\mathbb{R}^{p(p-1)},\quad\mathbf{v}-\mathbf{r}=\mathbf{0}.
		\end{split}
	\end{equation}
	The iteration formulas for the ADMM are given by
	\begin{equation}\label{admmjns}
		\begin{split}
			\mathbf{v}^{t+1}&=(\mathds{Z}^{\intercal}\mathds{Z}+nb\mathbf{I}_{p(p-1)})^{-1}[\mathds{Z}^{\intercal}\mathds{Y}+nb(\mathbf{r}^{t}-\mathbf{u}^{t})],\\
			\mathbf{r}^{t+1}&=\max\left(\mathbf{v}^{t+1}+\mathbf{u}^{t}-\frac{\lambda}{b}\mathbf{w},\mathbf{0}\right)-\max\left(-\mathbf{v}^{t+1}-\mathbf{u}^{t}-\frac{\lambda}{b}\mathbf{w},\mathbf{0}\right),\\
			\mathbf{u}^{t+1}&=\mathbf{u}^{t}+\mathbf{v}^{t+1}-\mathbf{r}^{t+1},
		\end{split}
	\end{equation}	
	where the $\max$ is taken for each component separately and $b$ is the step size. In practice, because of the size of $\mathbf{v}^{t+1}$, we compute it iteratively. This is possible as will be shown below.

	\subsection{Computational complexity}
	To show that our approach is computationally faster than the joint graphical lasso of \cite{guo2011joint} we first illustrate the ADMM for the method developed therein. The primal problem for the joint graphical lasso is
	\begin{equation*}
		\begin{split}
			&\text{minimize}\quad\tr(\mathbf{S}\mathbf{\Omega})-\log\det(\mathbf{\Omega})+\lambda\sum_{l<j}q_{lj}|\omega_{lj}|\\
			&\text{subject to}\quad\mathbf{\Omega}\succ\mathbf{0},
		\end{split}
	\end{equation*}
	and the dual is
	\begin{align*}
		&\text{minimize}\quad\tr(\mathbf{S}\mathbf{\Omega})-\log\det(\mathbf{\Omega})+\lambda\sum_{l<j}q_{lj}|z_{lj}|\\
		&\text{subject to}\quad\mathbf{\Omega}\succ\mathbf{0},\quad\mathbf{\Omega}-\mathbf{Z}=\mathbf{0}.
	\end{align*}
	The iteration formulas are given by
	\begin{equation}\label{admmjglasso}
		\begin{split}
			\mathbf{\Omega}^{t+1}&=\mathbf{Y}\left\{\frac{1}{2b}\left[\mathbf{\Lambda}+\left(\mathbf{\Lambda}^{2}+4b\mathbf{I}_{p}\right)^{1/2}\right]\right\}\mathbf{Y}^{\intercal},\\
			\mathbf{Z}^{t+1}&=\max\left(\mathbf{\Omega}^{t+1}+\mathbf{U}^{t}-\frac{\lambda}{b}\mathbf{Q},\mathbf{0}\right)-\max\left(-\mathbf{\Omega}^{t+1}-\mathbf{U}^{t}-\frac{\lambda}{b}\mathbf{Q},\mathbf{0}\right),\\
			\mathbf{U}^{t+1}&=\mathbf{U}^{t}+\mathbf{\Omega}^{t+1}-\mathbf{Z}^{t+1},
		\end{split}
	\end{equation}
	where the $\max$ is performed componentwise, $\mathbf{Q}=(q_{lj})$ is the symmetric matrix of weights, $\mathbf{Y}$ is a matrix whose columns are the eigenvectors, $\mathbf{\Lambda}$ is a diagonal matrix of the eigenvalues obtained by performing spectral decomposition on the symmetric matrix
	\begin{align*}
		b(\mathbf{Z}^{t}-\mathbf{U}^{t})-\mathbf{S}.
	\end{align*}
	Thus, the most time consuming part of \eqref{admmjglasso} is that of performing spectral decomposition on $b(\mathbf{Z}^{t}-\mathbf{U}^{t})-\mathbf{S}$, which is of computational complexity $\mathcal{O}(p^{3})$.
	
	On the other hand, the most time consuming part of \eqref{admmjns} is that of inverting the matrix $\mathds{Z}^{\intercal}\mathds{Z}+nb\mathbf{I}_{p(p-1)}$, which is of computational complexity $\mathcal{O}(p^{6})$. Because of its block-diagonal form, it is equivalent to inverting the matrices $\mathds{X}_{-j}^{\intercal}\mathds{X}_{-j}+nb\mathbf{I}_{p-1}$, $j=1,\ldots,p$, which can be done swiftly, as the next proposition shows. 
		\medskip
		\begin{proposition}\label{prop:1}
			Let $\mathbf{M}$ denote the matrix $\mathds{X}\mathds{X}^{\intercal}+nb\mathbf{I}_{n}$. The following identity holds.
			\begin{align*}
				(\mathds{X}_{-j}^{\intercal}\mathds{X}_{-j}+nb\mathds{I}_{p-1})^{-1}=\frac{1}{nb}\left(\mathbf{I}_{p-1}-\mathds{X}_{-j}^{\intercal}\mathbf{M}^{-1}\mathds{X}_{-j}+\frac{\mathds{X}_{-j}^{\intercal}\mathbf{M}^{-1}\mathds{X}_{j}\mathds{X}_{j}^{\intercal}\mathbf{M}^{-1}\mathds{X}_{-j}}{1-\mathds{X}_{j}^{\intercal}\mathbf{M}^{-1}\mathds{X}_{j}}\right).
			\end{align*}
		\end{proposition}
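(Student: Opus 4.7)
The plan is to reduce the $(p-1)\times(p-1)$ inverse on the left-hand side to quantities involving the $n\times n$ matrix $\mathbf{M}$ via two classical rank updates: the Woodbury identity (to move from a $(p-1)$-dimensional inverse to an $n$-dimensional one) followed by the Sherman--Morrison formula (to swap $\mathds{X}_{-j}$ for the full matrix $\mathds{X}$). This is exactly what lets us reuse a single precomputed $\mathbf{M}^{-1}$ across $j=1,\dots,p$.

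First I would apply the Woodbury identity to
\begin{align*}
\mathds{X}_{-j}^{\intercal}\mathds{X}_{-j}+nb\,\mathbf{I}_{p-1}=nb\,\mathbf{I}_{p-1}+\mathds{X}_{-j}^{\intercal}\mathbf{I}_{n}\mathds{X}_{-j},
\end{align*}
taking $A=nb\,\mathbf{I}_{p-1}$, $U=\mathds{X}_{-j}^{\intercal}$, $C=\mathbf{I}_{n}$, $V=\mathds{X}_{-j}$. After clearing factors of $nb$, this gives
\begin{align*}
(\mathds{X}_{-j}^{\intercal}\mathds{X}_{-j}+nb\,\mathbf{I}_{p-1})^{-1}=\frac{1}{nb}\Bigl(\mathbf{I}_{p-1}-\mathds{X}_{-j}^{\intercal}\mathbf{N}^{-1}\mathds{X}_{-j}\Bigr),\qquad \mathbf{N}:=nb\,\mathbf{I}_{n}+\mathds{X}_{-j}\mathds{X}_{-j}^{\intercal}.
\end{align*}

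Next I would relate $\mathbf{N}^{-1}$ to $\mathbf{M}^{-1}$. Since $\mathds{X}\mathds{X}^{\intercal}=\mathds{X}_{-j}\mathds{X}_{-j}^{\intercal}+\mathds{X}_{j}\mathds{X}_{j}^{\intercal}$, we have $\mathbf{N}=\mathbf{M}-\mathds{X}_{j}\mathds{X}_{j}^{\intercal}$, so by the Sherman--Morrison formula (valid because $1-\mathds{X}_{j}^{\intercal}\mathbf{M}^{-1}\mathds{X}_{j}\ne 0$, which follows from positive-definiteness of $\mathbf{N}$)
\begin{align*}
\mathbf{N}^{-1}=\mathbf{M}^{-1}+\frac{\mathbf{M}^{-1}\mathds{X}_{j}\mathds{X}_{j}^{\intercal}\mathbf{M}^{-1}}{1-\mathds{X}_{j}^{\intercal}\mathbf{M}^{-1}\mathds{X}_{j}}.
\end{align*}
Sandwiching this identity between $\mathds{X}_{-j}^{\intercal}$ and $\mathds{X}_{-j}$ and substituting into the previous display yields the claimed formula.

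There is no real obstacle here beyond bookkeeping of the two rank updates; the only point that deserves a brief remark is nondegeneracy of the scalar denominator $1-\mathds{X}_{j}^{\intercal}\mathbf{M}^{-1}\mathds{X}_{j}$, which I would justify by observing that $\mathbf{N}\succ 0$ (since $b>0$) is equivalent, via the Sherman--Morrison determinant lemma applied to $\mathbf{N}=\mathbf{M}-\mathds{X}_{j}\mathds{X}_{j}^{\intercal}$, to $1-\mathds{X}_{j}^{\intercal}\mathbf{M}^{-1}\mathds{X}_{j}>0$.
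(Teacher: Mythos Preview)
Your proposal is correct and follows essentially the same route as the paper: Woodbury to pass from the $(p-1)\times(p-1)$ inverse to $(\mathds{X}_{-j}\mathds{X}_{-j}^{\intercal}+nb\mathbf{I}_n)^{-1}$, then Sherman--Morrison to rewrite the latter in terms of $\mathbf{M}^{-1}$. Your added justification that $1-\mathds{X}_{j}^{\intercal}\mathbf{M}^{-1}\mathds{X}_{j}>0$ is a nice touch the paper leaves implicit.
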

		Proposition \ref{prop:1} helps us reduce the computational complexity of \eqref{admmjns}, as is shown in the following proposition.
		\medskip
		\begin{proposition}\label{prop:2}
			The computational complexity of \eqref{admmjns} is $\mathcal{O}(np^{2})$.
	\end{proposition}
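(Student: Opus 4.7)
The plan is to analyze the cost of one ADMM iteration \eqref{admmjns}. Since $\mathbf{r}^{t+1}$ and $\mathbf{u}^{t+1}$ are computed componentwise on vectors of length $p(p-1)$, these two updates trivially cost $\mathcal{O}(p^{2})$. All of the work is concentrated in the $\mathbf{v}$-update, and I would attack it by exploiting the block-diagonal structure of $\mathds{Z}^{\intercal}\mathds{Z}+nb\mathbf{I}_{p(p-1)}$ together with the identity of Proposition \ref{prop:1}.

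First I would partition $\mathbf{v}^{t+1}$ into $p$ blocks $\mathbf{v}^{t+1}_{j}\in\mathbb{R}^{p-1}$ and, for each $j$, rewrite the update as
\[
\mathbf{v}^{t+1}_{j}=(\mathds{X}_{-j}^{\intercal}\mathds{X}_{-j}+nb\mathbf{I}_{p-1})^{-1}\mathbf{c}_{j}^{t},\qquad \mathbf{c}_{j}^{t}=\mathds{X}_{-j}^{\intercal}\mathds{X}_{j}+nb\bigl(\mathbf{r}_{j,-j}^{t}-\mathbf{u}_{j,-j}^{t}\bigr).
\]
Plugging in Proposition \ref{prop:1}, $\mathbf{v}^{t+1}_{j}$ becomes a linear combination of $\mathbf{c}_{j}^{t}$, of $\mathds{X}_{-j}^{\intercal}\mathbf{M}^{-1}\mathds{X}_{-j}\mathbf{c}_{j}^{t}$, and of $\mathds{X}_{-j}^{\intercal}\mathbf{M}^{-1}\mathds{X}_{j}$ scaled by the scalar $(1-\mathds{X}_{j}^{\intercal}\mathbf{M}^{-1}\mathds{X}_{j})^{-1}\mathds{X}_{j}^{\intercal}\mathbf{M}^{-1}\mathds{X}_{-j}\mathbf{c}_{j}^{t}$. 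The crucial observation is that only the \emph{action} of $\mathds{X}_{-j}^{\intercal}\mathbf{M}^{-1}\mathds{X}_{-j}$ on $\mathbf{c}_{j}^{t}$ is needed, and this action can be carried out in the cheap right-to-left order: compute $\mathds{X}_{-j}\mathbf{c}_{j}^{t}\in\mathbb{R}^{n}$ in $\mathcal{O}(np)$ flops, left-multiply by $\mathbf{M}^{-1}$ in $\mathcal{O}(n^{2})$ flops, and finally left-multiply by $\mathds{X}_{-j}^{\intercal}$ in $\mathcal{O}(np)$ flops. The scalar $\mathds{X}_{j}^{\intercal}\mathbf{M}^{-1}\mathds{X}_{-j}\mathbf{c}_{j}^{t}$ is then a byproduct at cost $\mathcal{O}(n)$, so each block is updated in $\mathcal{O}(np)$ flops, and summing over $j=1,\ldots,p$ gives $\mathcal{O}(np^{2})$ per iteration.

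Next I would account for the quantities that are computed \emph{once} and reused across all ADMM iterations: the $n\times n$ inverse $\mathbf{M}^{-1}$ at cost $\mathcal{O}(n^{3})$; the fixed vectors $\mathds{X}_{-j}^{\intercal}\mathds{X}_{j}$ and $\mathds{X}_{-j}^{\intercal}\mathbf{M}^{-1}\mathds{X}_{j}$ and the scalars $\mathds{X}_{j}^{\intercal}\mathbf{M}^{-1}\mathds{X}_{j}$ for $j=1,\ldots,p$. A direct count shows these total $\mathcal{O}(n^{3}+n^{2}p+np^{2})$, which is $\mathcal{O}(np^{2})$ in the relevant high-dimensional regime $p\gtrsim n$. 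Hence the preprocessing is absorbed into the claimed bound.

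The main obstacle is simply enforcing the right order in the matrix--vector products inside the $\mathbf{v}$-update. A naive reading of Proposition \ref{prop:1} suggests actually forming the $(p-1)\times(p-1)$ matrix $\mathds{X}_{-j}^{\intercal}\mathbf{M}^{-1}\mathds{X}_{-j}$, at cost $\mathcal{O}(n^{2}p+np^{2})$ per $j$ and $\mathcal{O}(np^{3})$ in total, which would destroy the argument. The proof must make explicit that the action on $\mathbf{c}_{j}^{t}$ is all that is required, so the inner-to-outer multiplication order above is legitimate; combined with the trivial $\mathcal{O}(p^{2})$ costs of the $\mathbf{r}$- and $\mathbf{u}$-updates, this yields the $\mathcal{O}(np^{2})$ per-iteration bound.
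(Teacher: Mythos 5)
Your proposal is correct and follows essentially the same route as the paper: decompose the $\mathbf{v}$-update into $p$ blocks via the block-diagonal structure of $\mathds{Z}^{\intercal}\mathds{Z}+nb\mathbf{I}_{p(p-1)}$, apply Proposition \ref{prop:1}, and evaluate only the \emph{action} of the inverse on a vector in the cheap right-to-left order to get $\mathcal{O}(np)$ per block. Your version is in fact slightly more careful than the paper's, since you make the multiplication order, the one-time $\mathcal{O}(n^{3})$ preprocessing of $\mathbf{M}^{-1}$, and the implicit $p\gtrsim n$ regime explicit, whereas the paper simply asserts the $\mathcal{O}(np)$ per-block cost is ``easy to see.''
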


	\section{Asymptotics}
	In this section we prove the asymptotic consistency of the one-step version of \eqref{obj:5}, that is, the estimator produced by the first step of the LLA algorithm uncovers the true graph with probability tending to 1. Note that \eqref{obj:5} is essentially the summation of $p$ adaptive lasso problems \citep{zou2006adaptive}. In this section, since $k$ does not depend on $n$, we assume that we have the same number of samples $n$ for each subpopulation.
	
	The asymptotic consistency of the adaptive lasso has been established by \cite{huang2008adaptive}, where they proved that if the initial estimate satisfies certain conditions, then we choose the right variables with probability tending to 1. The difference of our case is that we are dealing with $p_{n}$ regressions each having $p_{n}-1$ variables. Thus, if we wish to prove graph consistency, we need to establish the asymptotic consistency uniformly for all regressions with diverging $p_{n}$.
	
	To do so, we begin by defining some useful notation. Let $\bm{\theta}_{j}^{(k)}$ denote the vector of regression coefficients defined in section \ref{methodology}. We assume that $\bm{\theta}_{j}^{(k)}$ has $q_{nj}^{(k)}$ nonzero elements and $s_{nj}^{(k)}$ zero elements, so that $q_{nj}^{(k)}+s_{nj}^{(k)}=p_{n}-1$. Without loss of generality, we assume the first $q_{nj}^{(k)}$ elements of $\bm{\theta}_{j}^{(k)}$ to be nonzero, and the next $s_{nj}^{(k)}$ elements to be 0. We denote the $l$-th element of $\bm{\theta}_{j}^{(k)}$ by $\theta_{lj}^{(k)}$ and define
	\begin{align*}
		b_{n}&=\min\left\{|\theta_{lj}^{(k)}|:j=1,
		\ldots,p_{n},\,l=1,\ldots,q_{nj}^{(k)}\right\}\\
		q_{n}&=\max\{q_{nj}^{(k)}:j=1,\ldots,p_{n}\}\\
		s_{n}&=\max\{s_{nj}^{(k)}:j=1,\ldots,p_{n}\}
	\end{align*}
	Notice that $b_{n},q_{n}$ and $s_{n}$ depend on $k$. However, because $k$ does not depend on $n$, for simplicity, we write $b_{n},q_{n},s_{n}$ instead of $b_{n}^{(k)},q_{n}^{(k)},s_{n}^{(k)}$.
	
	Without loss of generality, we assume that the matrices $\mathds{X}^{(1)},\ldots,\mathds{X}^{(K)}$ are centered and scaled, that is
	\begin{align}\label{centerscale}
		\sum_{i=1}^{n}x_{ij}^{(k)}=0\quad\text{and}\quad\frac{1}{n}\sum_{i=1}^{n}(x_{ij}^{(k)})^{2}=1,
	\end{align}
	for all $j,k$. Furthermore, for each $j,k$, we compartmentalize $\mathds{X}_{-j}^{(k)}$ into $\left[\mathds{X}_{-j;1}^{(k)},\mathds{X}_{-j;2}^{(k)}\right]$, where $\mathds{X}_{-j;1}^{(k)}\in\mathbb{R}^{n\times q_{nj}^{(k)}}$ and $\mathds{X}_{-j;2}^{(k)}\in\mathbb{R}^{n\times s_{nj}^{(k)}}$, and define $\hat{\mathbf{\Sigma}}_{jj}^{(k)}$ to be the matrix
	\begin{align*}
		\frac{1}{n}\mathds{X}_{-j;1}^{(k)\intercal}\mathds{X}_{-j;1}^{(k)}.
	\end{align*}
	We denote by $\upsilon_{nj}^{(k)}$ its smallest eigenvalue.
	\bigskip
	\begin{assumption}\label{ass:1}
		There exists a positive number $\xi$ such that $\upsilon_{nj}^{(k)}>\xi$ for all $n,j$.
	\end{assumption}
	\bigskip
	
	Let $\tilde{\bm{\theta}}_{j}^{(k)}$ be an initial estimate of $\bm{\theta}_{j}^{(k)}$ for all $j,k$. By construction, the weight $\tau_{lj}$ is equal to
	\begin{align*}
		\frac{1}{2}\left(\sum_{k=1}^{K}|\tilde{\theta}_{lj}^{(k)}|\right)^{-1/2},\quad l=1,\ldots,p_{n}-1.
	\end{align*}
	Let $\sgn:\mathbb{R}\rightarrow\mathbb{R}$ denote the sign function, such that
	\begin{align*}
		\sgn(t)=
		\begin{cases*}
			-1&,\text{if}\quad t<0\\
			0&,\text{if}\quad t=0\\
			1&,\text{if}\quad t>0
		\end{cases*}
		.
	\end{align*}
	Define $\mathbf{s}_{j}^{(k)}=\left(s_{1j}^{(k)},\ldots,s_{q_{nj}^{(k)}j}^{(k)}\right)^{\intercal}$, where
	\begin{align*}
		s_{lj}^{(k)}=\tau_{lj}\sgn\left(\theta_{lj}^{(k)}\right),\quad l=1,\ldots,q_{nj}^{(k)},
	\end{align*}
	for all regressions $j=1,\ldots,p_{n}$.
	
	To establish uniform consistency for all regressions $j=1,\ldots,p_{n}$, we need stronger assumptions on the initial estimators $\tilde{\bm{\theta}}_{j}^{(k)}$ than those made in \cite{huang2008adaptive}, particularly in the tail probability behaviors. We make the following assumptions.
	
	\bigskip
	\begin{assumption}\label{ass:2}
		For each $\tilde{\theta}_{lj}^{(k)}$ there exists a nonrandom constant $h_{lj}^{(k)}$ such that, for all $t\geq 0$,
		\begin{align*}
			P\left(\max_{\substack{1\leq j\leq p_{n} \\ 1\leq l\leq p_{n}-1}}\left|\frac{\sum_{k=1}^{K}|h_{lj}^{(k)}|}{\sum_{k=1}^{K}|\tilde{\theta}_{lj}^{(k)}|}-1\right|\geq t\right)&\leq\exp(-C t),\\
			P\left(\max_{\substack{1\leq j\leq p_{n} \\ 1\leq l\leq p_{n}-1}}\left|\sum_{k=1}^{K}|\tilde{\theta}_{lj}^{(k)}|-\sum_{k=1}^{K}|h_{lj}^{(k)}|\right|\geq t\right)&\leq\exp(-C t).
		\end{align*}
		Furthermore, the constants $h_{lj}^{(k)}$ satisfy
		\begin{align*}
			\max_{\substack{1\leq j\leq p_{n} \\ 1\leq l\leq p_{n}-1}}\frac{1}{\sum_{k=1}^{K}|h_{lj}^{(k)}|}\leq M_{1},\quad\max_{\substack{1\leq j\leq p_{n} \\ 1\leq l\leq p_{n}-1}}\sum_{k=1}^{K}|h_{lj}^{(k)}|\leq M_{2}.
		\end{align*}
	\end{assumption}
	\bigskip
	
	We denote the $l$-th column of $\mathds{X}_{-j}^{(k)}$ by $(\mathds{X}_{-j}^{(k)})_{l}$.
	
	\bigskip
	\begin{assumption}\label{ass:3}
		For each $t\geq 0$ it holds that
		\begin{align*}
			P\left(\max_{\substack{1\leq j\leq p_{n} \\ q_{nj}^{(k)}+1\leq l\leq p_{n}-1}}\left|(\mathds{X}_{-j}^{(k)})_{l}^{\intercal}\mathds{X}_{-j;1}^{(k)}\hat{\mathbf{\Sigma}}_{jj}^{(k)-1}\mathbf{s}_{j}\right|\geq t\right)&\leq \exp(-Ct^{2}).
		\end{align*}
	\end{assumption}
	\bigskip
	\begin{assumption}\label{ass:4}
			The sequences $p_{n},b_{n},q_{n},\lambda_{n}$ satisfy
			\begin{align*}
				\frac{\log p_{n}}{nb_{n}^{2}}+\frac{\lambda_{n}^{2}q_{n}\log p_{n}}{b_{n}^{2}}+\frac{\log p_{n}}{\lambda_{n}^{2}+n}\rightarrow 0
			\end{align*}
			as $n\rightarrow +\infty$.
	\end{assumption}
	\bigskip

	We are now ready to prove the main theorem of this section. We need to introduce the notion of sign equality between vectors, which is crucial for our proof. For any vector $\mathbf{v}=(v_{1},\ldots,v_{t})^{\intercal}$, we denote its sign vector by $\sgn(\mathbf{v})=(\sgn(x_{1}),\ldots,\sgn(x_{t}))^{\intercal}$. We say that two vectors $\bm{v},\bm{u}\in\mathbb{R}^{t}$ are equal in sign if $\sgn(\bm{v})=\sgn(\bm{u})$, and denote this by $\bm{v}=_{s}\bm{u}$. 
	
	\bigskip
	\begin{theorem}\label{thm:3}
		Suppose that Assumptions \ref{ass:1},\,\ref{ass:2},\,\ref{ass:3},\,\ref{ass:4} hold. Then,
		\begin{align*}
			P\left(\hat{\mathcal{E}}^{(k)}=\mathcal{E}^{(k)}\right)\rightarrow 1\quad\text{as}\quad n\rightarrow+\infty.
		\end{align*}
	\end{theorem}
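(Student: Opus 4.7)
The plan is to reduce Theorem \ref{thm:3} to a uniform version of the adaptive-lasso sign-consistency argument of Huang, Ma and Zhang (2008), applied to each of the $p_n$ neighborhood regressions inside population $k$, and then to control the failure probability by a union bound whose logarithmic cost is absorbed by the exponential tails imposed in Assumptions \ref{ass:2} and \ref{ass:3}. Once the weights $\tau_{lj}$ are regarded as fixed (the one-step estimator uses the initial $\tilde{\bm{\theta}}$), the objective \eqref{obj:5} decouples across columns $j$ of $\mathbf{\Theta}^{(k)}$, so it is enough to show that every column-$j$ adaptive lasso solution $\hat{\bm{\theta}}_j^{(k)}$ is sign-equal to $\bm{\theta}_j^{(k)}$ with probability $1-o(p_n^{-1})$; a union bound over $j$ then yields $P(\hat{\mathcal{E}}^{(k)}=\mathcal{E}^{(k)})\to 1$ under either the conservative or the liberal edge rule.

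For a fixed $j$, the standard oracle argument goes through: define the restricted estimator $\hat{\bm{\theta}}_{j;1}^{(k)}$ on the first $q_{nj}^{(k)}$ coordinates and extend it by zeros, then check the KKT conditions of the full penalized problem. First-order conditions on the active set and Assumption \ref{ass:1} give the expansion
\begin{align*}
\hat{\bm{\theta}}_{j;1}^{(k)}-\bm{\theta}_{j;1}^{(k)}=\hat{\mathbf{\Sigma}}_{jj}^{(k)-1}\mathds{X}_{-j;1}^{(k)\intercal}\bm{\epsilon}^{(k)}/n-\lambda_n\,\hat{\mathbf{\Sigma}}_{jj}^{(k)-1}\mathbf{s}_j^{(k)},
\end{align*}
whose $\ell_\infty$ norm I need to dominate by $b_n$ in order to preserve active signs. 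Gaussian concentration of the noise projection gives $\mathcal{O}(\sqrt{\log p_n/n})$, while Assumptions \ref{ass:1} and \ref{ass:2} bound the bias term by $\mathcal{O}(\lambda_n\sqrt{q_n})$ up to logarithmic factors, and these are exactly what the first two terms of Assumption \ref{ass:4} force to be $o(b_n)$. For the inactive set, the subgradient inequality
\begin{align*}
\bigl|(\mathds{X}_{-j}^{(k)})_l^{\intercal}(\mathds{X}_j^{(k)}-\mathds{X}_{-j;1}^{(k)}\hat{\bm{\theta}}_{j;1}^{(k)})/n\bigr|<\lambda_n\tau_{lj}
\end{align*}
splits, after substituting the oracle expansion, into a projected-noise piece that Gaussian tails together with the third term of Assumption \ref{ass:4} control at order $\sqrt{\log p_n/n}$, and the cross term $(\mathds{X}_{-j}^{(k)})_l^{\intercal}\mathds{X}_{-j;1}^{(k)}\hat{\mathbf{\Sigma}}_{jj}^{(k)-1}\mathbf{s}_j$, which is exactly the object Assumption \ref{ass:3} equips with a sub-Gaussian tail. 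The lower bound on $\tau_{lj}$ at inactive indices, obtained from the $M_2$ clause in Assumption \ref{ass:2}, guarantees that $\lambda_n\tau_{lj}$ dominates both stochastic contributions.

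The main obstacle is the uniformity over $j=1,\dots,p_n$: the single-regression argument is essentially classical, but a naive polynomial tail bound, once multiplied by $p_n$, would destroy the consistency statement. This is precisely why Assumptions \ref{ass:2} and \ref{ass:3} are formulated with the maxima over indices placed inside the probability and with exponential decay, and why Assumption \ref{ass:4} carries $\log p_n$ factors: the $p_n$-fold union bound then costs only an extra $\log p_n$, which is absorbed by each of the three rates. Once the active-sign and inactive-subgradient events are established simultaneously for all $j$, the one-step estimator coincides with its oracle on the true support with probability tending to one, and the theorem follows by taking complements and noting that the number of subpopulations $K$ is fixed.
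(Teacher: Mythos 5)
Your proposal is correct and follows essentially the same route as the paper: reduce to per-column sign consistency of the one-step adaptive lasso via the KKT conditions, obtain the same oracle expansion $\hat{\bm{\theta}}_{j;1}-\bm{\theta}_{j;1}=\hat{\mathbf{\Sigma}}_{jj}^{-1}\mathds{X}_{-j;1}^{\intercal}\bm{\epsilon}/n-\lambda_{n}\hat{\mathbf{\Sigma}}_{jj}^{-1}\mathbf{s}_{j}$, split the failure event into the same four pieces (active-set noise, active-set bias, inactive-set projected noise, and the cross term handled by Assumption \ref{ass:3}), and absorb the union bound over $j$ using the exponential tails of Assumptions \ref{ass:2} and \ref{ass:3} together with the rates in Assumption \ref{ass:4}. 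This matches the paper's decomposition into the events $B_{1},\dots,B_{4}$ and its concluding combination step.
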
	
	The proof Theorem \ref{thm:3} can be found in the supplementary material.

	\section{Simulations}\label{simulations}	
	In this section we use simulation to compare the performance of the SNS with three existing methods: neighborhood selection \citep{meinshausen2006high} as applied to each subpopulation, which we refer to as the individual neighborhood selection (INS), joint graphical lasso method \citep[JGL;][]{guo2011joint}, and graphical lasso method \citep{yuan2007model} as applied to each subpopulation, which we refer to as the individual graphical lasso (IGL). We compare the methods both in ROC curve performance as well as CPU execution time of the optimization algorithm.
	
	To generate the data, we first construct the edge sets for all subpopulations, and then form the precision matrices. To form the edge sets $\mathcal{E}^{(1)},\ldots,\mathcal{E}^{(K)}$, we follow two steps:
	\begin{enumerate}
		\item Randomly choose a set of pairs $(j,l)\in\mathcal{V}\times\mathcal{V},j<l$, as a percentage $\boldsymbol{s}$ of the total number of edges $\binom{p}{2}$. This set constitutes the common edge set of the $K$ graphs and is denoted by $\mathcal{A}$.
		
		\item For each subpopulation $k$, randomly choose a set of pairs as a percentage $\rho$ of the number of common edges and denote this set by $\mathcal{B}^{(k)}$. The sets $\mathcal{B}^{(1)},\ldots,\mathcal{B}^{(K)}$ must satisfy
		\begin{align*}
			\bigcup_{k=1}^{K}\mathcal{B}^{(k)}\cap\mathcal{A}=\varnothing\quad\text{and}\quad\bigcap_{k=1}^{K}\mathcal{B}^{(k)}=\varnothing.
		\end{align*}
	\end{enumerate}	
	These sets are the individual edge structure of each subpopulation. Combining the above, we define $\mathcal{E}^{(k)}=\mathcal{A}\cup\mathcal{B}^{(k)},k=1,\ldots,K$.
	
	To form the precision matrices $\mathbf{\Omega}^{(1)},\ldots,\mathbf{\Omega}^{(K)}$, we follow three steps:
	\begin{enumerate}
		\item Generate $a_{jl}^{(k)}$ and $b_{jl}^{(k)}$ independently from $\mathcal{U}(0.5,1)$ and the Rademacher distribution, respectively, to form the matrices
		\begin{align*}
			\mathbf{A}^{(k)}=	\begin{cases}	
				a_{jl}^{(k)}b_{jl}^{(k)} &,(j,l)\in\mathcal{E}^{(k)}\\
				0 &,\text{otherwise}
			\end{cases}.
		\end{align*}
		
		\item To ensure symmetry, let
		\begin{align*}
			\mathbf{C}^{(K)}=\frac{\mathbf{A}^{(k)}+(\mathbf{A}^{(k)})^{\intercal}}{2}.
		\end{align*}
		
		\item Let $c_{jl}^{(k)}$ be the $(j,l)$-th element of $\mathbf{C}^{(k)}$. To ensure positive definiteness , we use Gershgorin's Circle Theorem \citep{bell1965gershgorin} to define the precision matrices $\mathbf{\Omega}^{(k)}=(\omega_{jl}^{(k)})$, with diagonal elements $\omega_{jj}^{(k)}=\sum_{q\ne j}|c_{jq}^{(k)}|+1$,
		and off-diagonal elements $\omega_{jl}^{(k)}=c_{jl}^{(k)}$.
	\end{enumerate}	
	With $\mathbf{\Omega}^{(1)},\ldots,\mathbf{\Omega}^{(K)}$ thus constructed, we are now ready to generate the observed data for each subpopulation. To do so we generate $\mathbf{X}_{1}^{(k)},\ldots,\mathbf{X}_{n}^{(k)}$ from a $\mathcal{N}_{p}(\mathbf{0},(\mathbf{\Omega}^{(k)})^{-1})$, where $\mathbf{X}_{i}^{(k)}=\left(X_{i1}^{(k)},\ldots,X_{ip}^{(k)}\right)^{\intercal}$.
	
	We compare the four methods on 12 different scenarios, which consist of the combinations of the dimensions $p=100,1000,2000,3000$ and proportions $\rho=0,0.5,1$. For each of the $p$'s mentioned above the common structure consists of $s=5\cdot 10^{-3}$, $5\cdot 10^{-4}$, $15\cdot 10^{-5}$, $75\cdot 10^{-6}$, as proportions of $\binom{p}{2}$, to achieve high levels of sparsity. In all settings the number of samples is $n=100$.
	
	For the ROC curves we plot the average true positive rate (ATPR) against the average false positive rate (AFPR), over a range of values of $\lambda$. Specifically,
	\begin{align*}
		\text{ATPR}(\lambda)&=\frac{1}{K}\sum_{k=1}^{K}\frac{\sum_{1 \leq j < l \leq p}\mathds{1} \bigg (\theta_{jl}^{(k)}\ne 0 \, , \,\hat{\theta}_{jl}^{(k)}(\lambda) \ne 0 \bigg)}{\sum_{1 \leq j < l \leq p}\mathds{1} \bigg(\theta_{jl}^{(k)}\ne 0 \bigg)}\\
		\text{AFPR}(\lambda)&=\frac{1}{K}\sum_{k=1}^{K}\frac{\sum_{1 \leq j < l \leq p}\mathds{1} \bigg (\theta_{jl}^{(k)}=0\,,\, \hat{\theta}_{jl}^{(k)}(\lambda) \ne 0 \bigg)}{\sum_{1 \leq j < l \leq p}\mathds{1} \bigg(\theta_{jl}^{(k)}=0 \bigg)},
	\end{align*}
	where $\mathds{1}$ is the indicator function and $\hat{\theta}_{jl}^{(k)}$ is the estimate of $\theta_{jl}^{(k)}$ using tuning parameter $\lambda$. The above quantities are calculated for 100 equally spaced values of $\lambda$, starting at $10^{-5}$ and ending at $1$. Each scenario is simulated 5 times, and the final ROC curve is the average of them. The curves are shown in Figure \ref{figure:1} and their respective areas-under-curve (AUC) in Table \ref{tabel:1}. We did not include the ROC results for IGL and JGL when $p=2000$ and 3000, because computing them would take more than 48 hours which is the wall time for the open queue in the PSU super-computer. Finally, in Table \ref{tabel:2}, we demonstrate the CPU times of SNS against JGL for one iteration of the ADMM algorithm mentioned in section 3. All the experiments were conducted on a 2.2 GHz Intel Xeon processor.

	\begin{figure}
		\centering
		\begin{tabular}{ccc}
			\includegraphics[scale=0.33 ]{./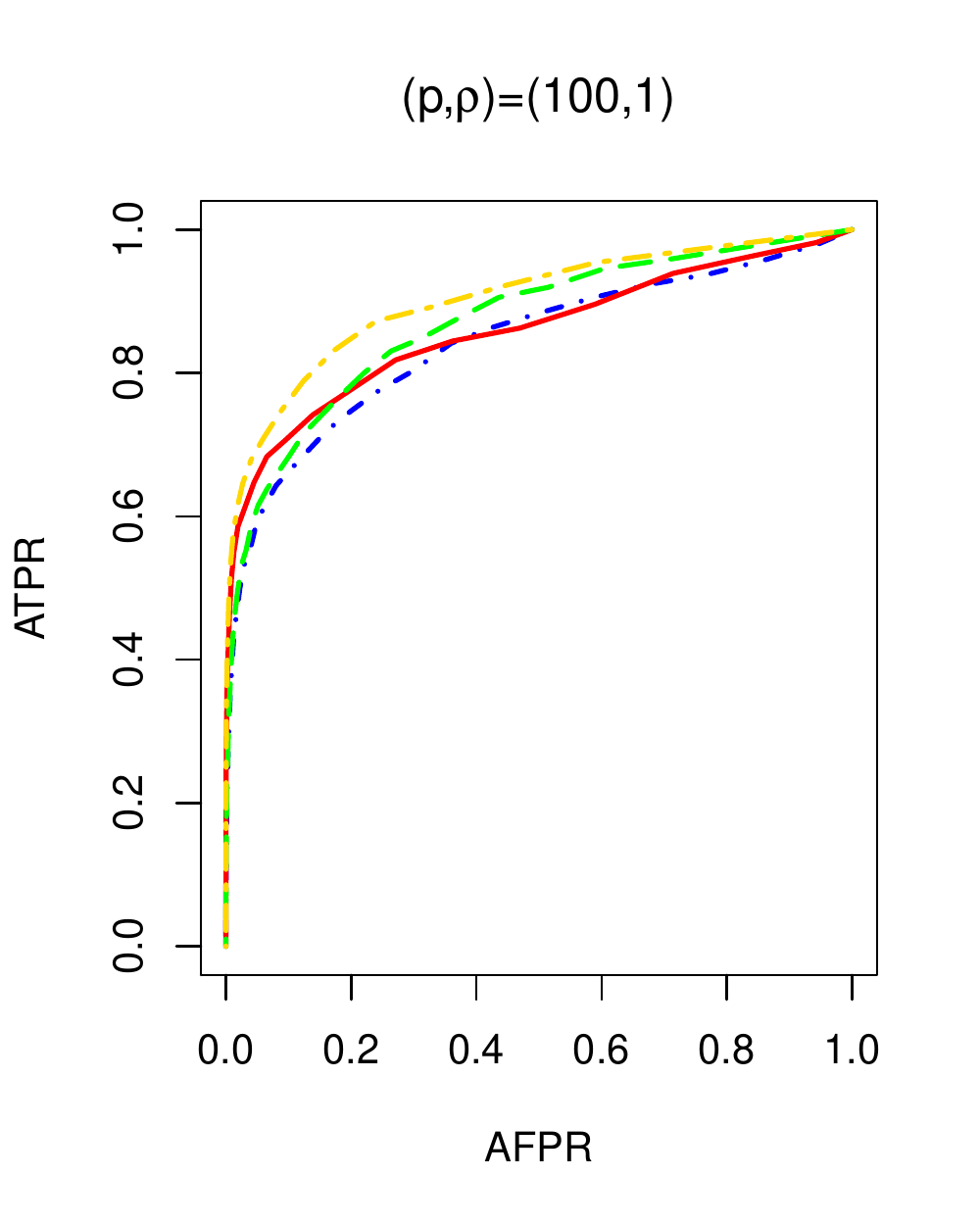} & \includegraphics[scale=0.33 ]{./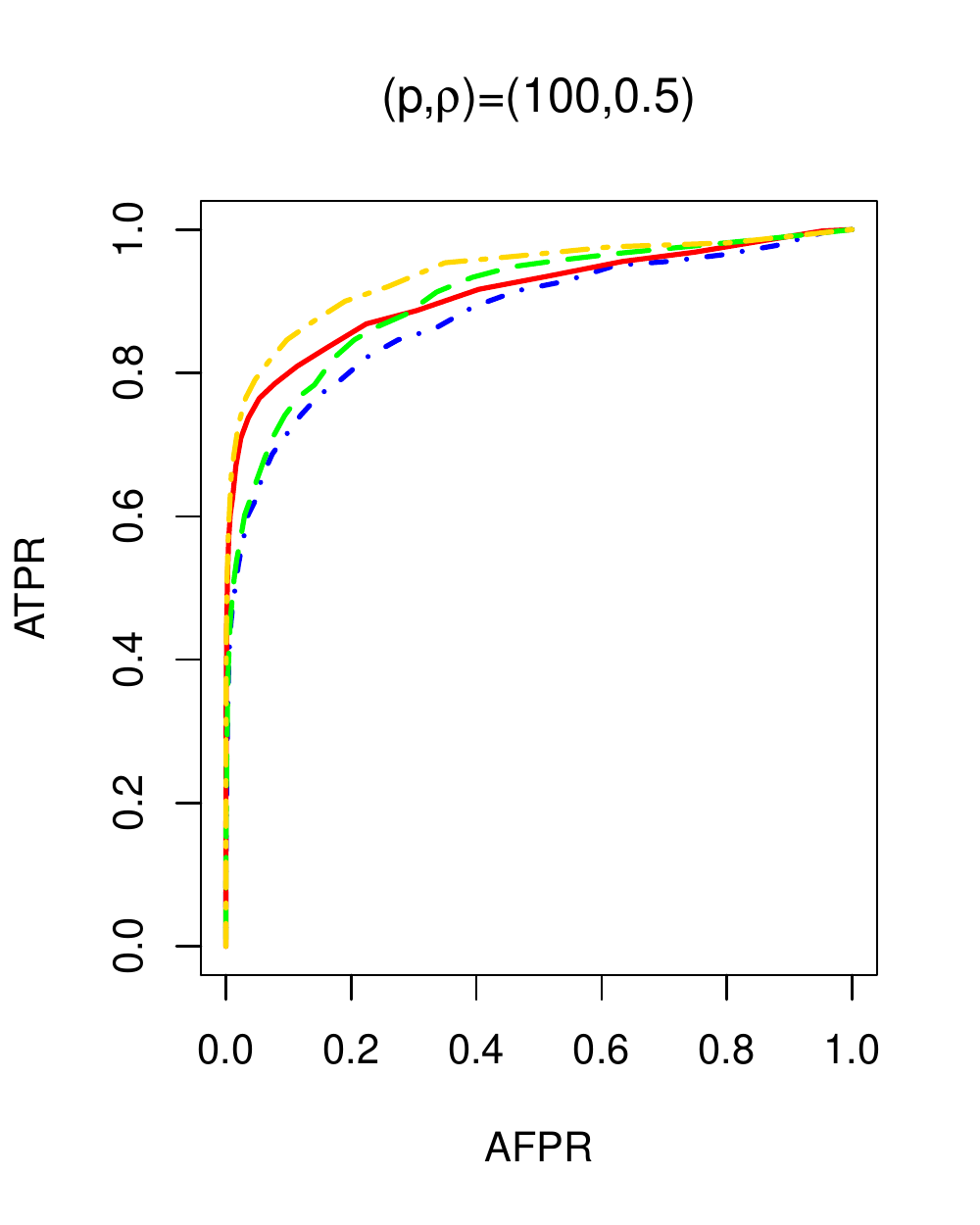} & 
			\includegraphics[scale=0.33 ]{./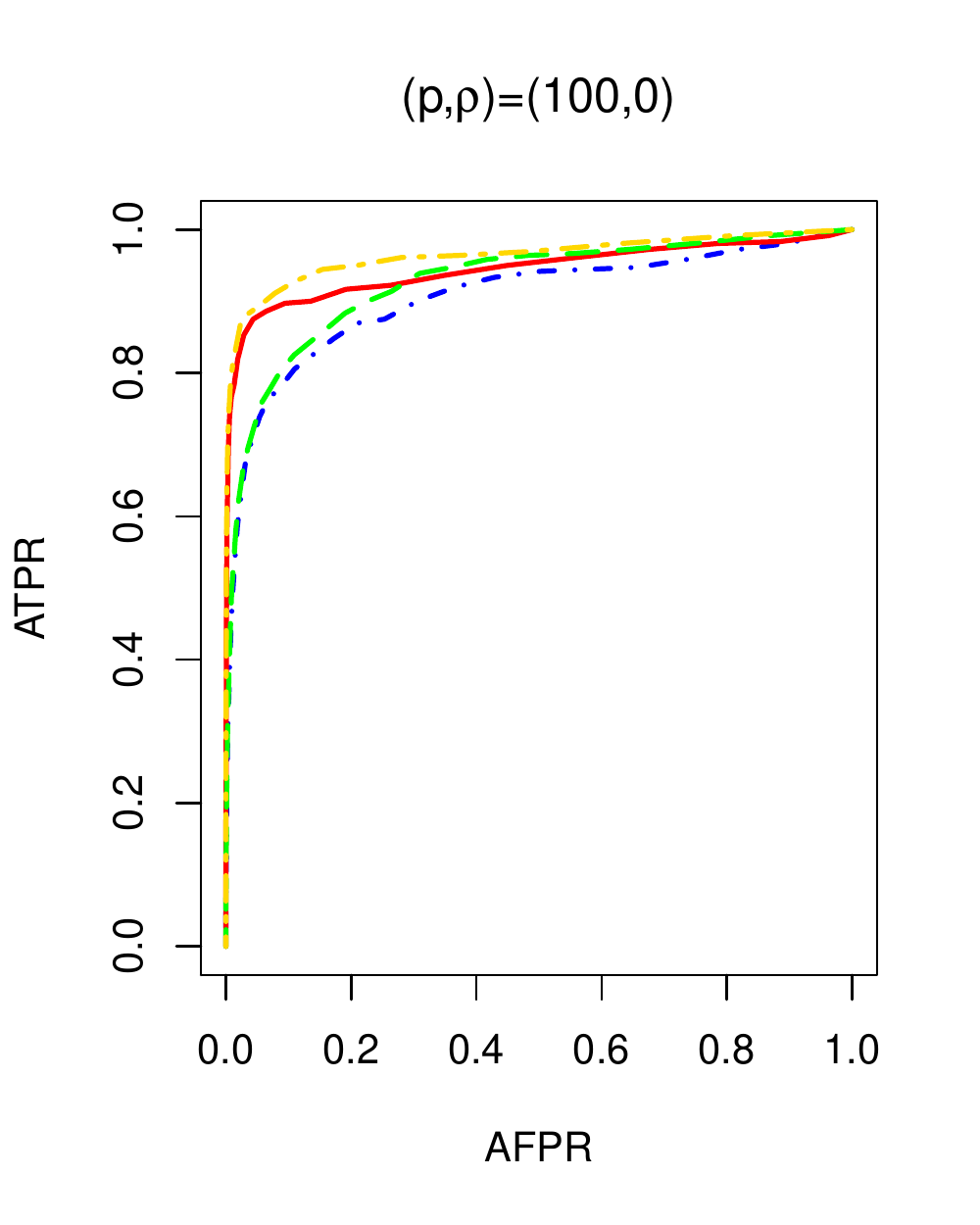}\\
			\includegraphics[scale=0.33 ]{./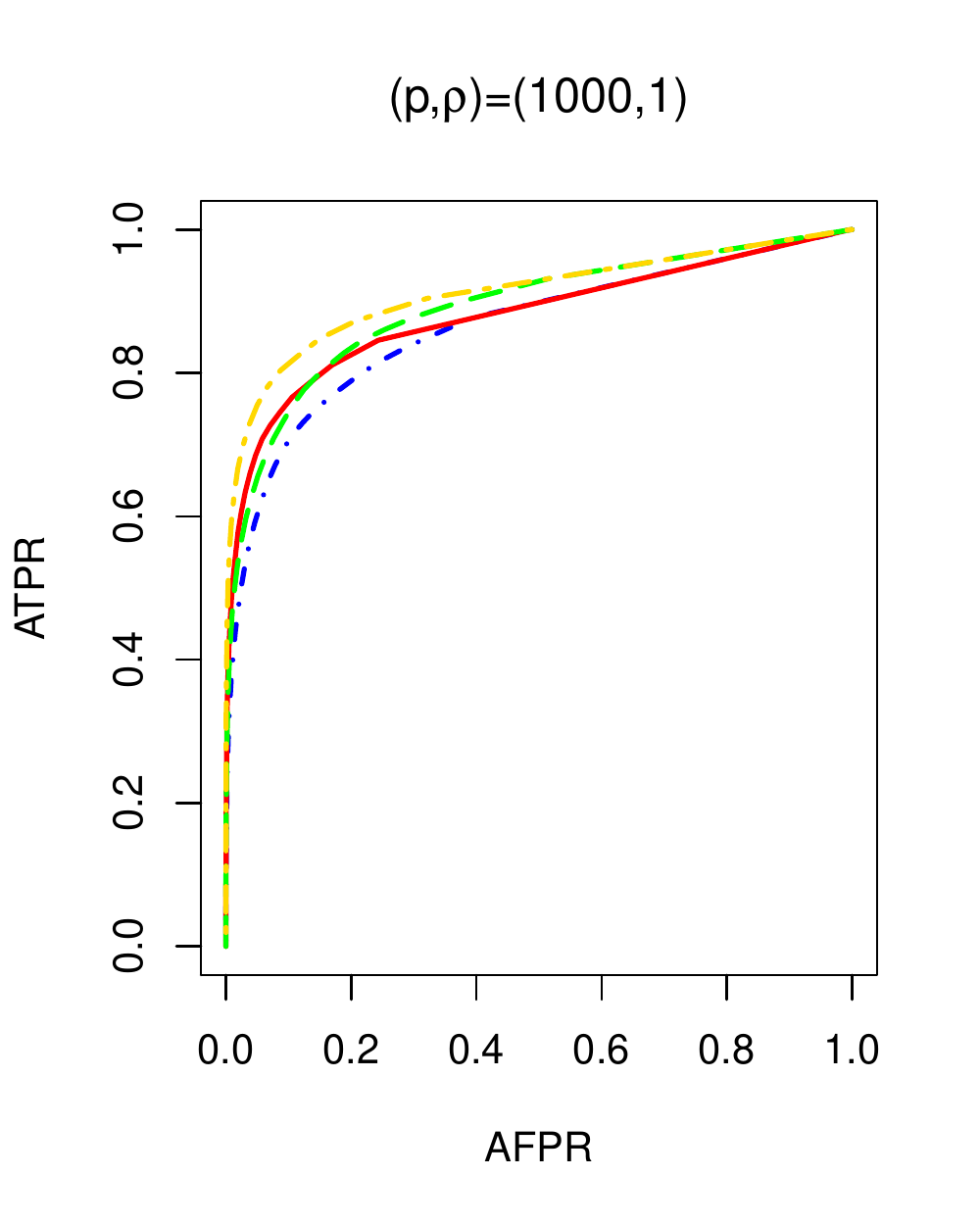} & \includegraphics[scale=0.33 ]{./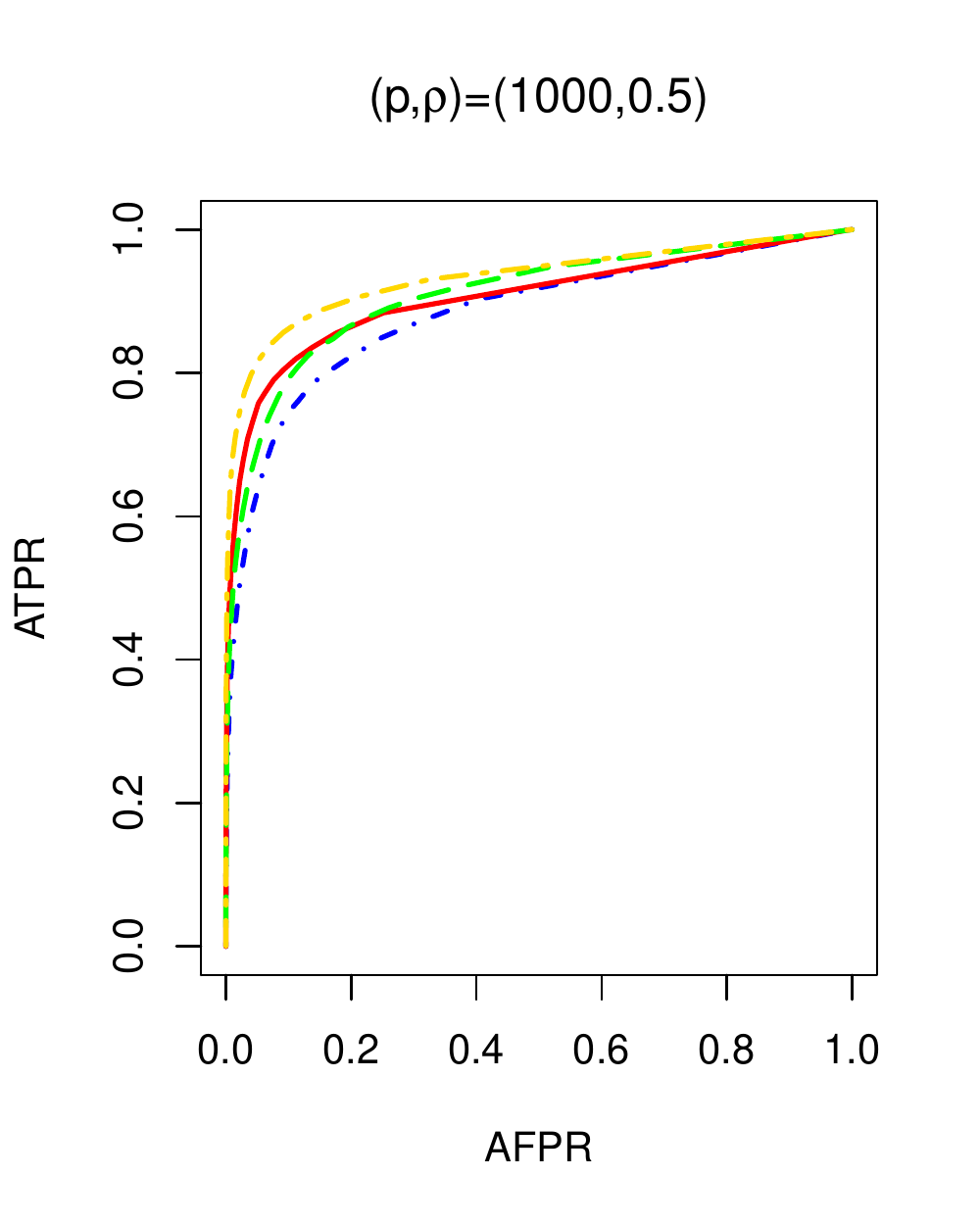} & 
			\includegraphics[scale=0.33 ]{./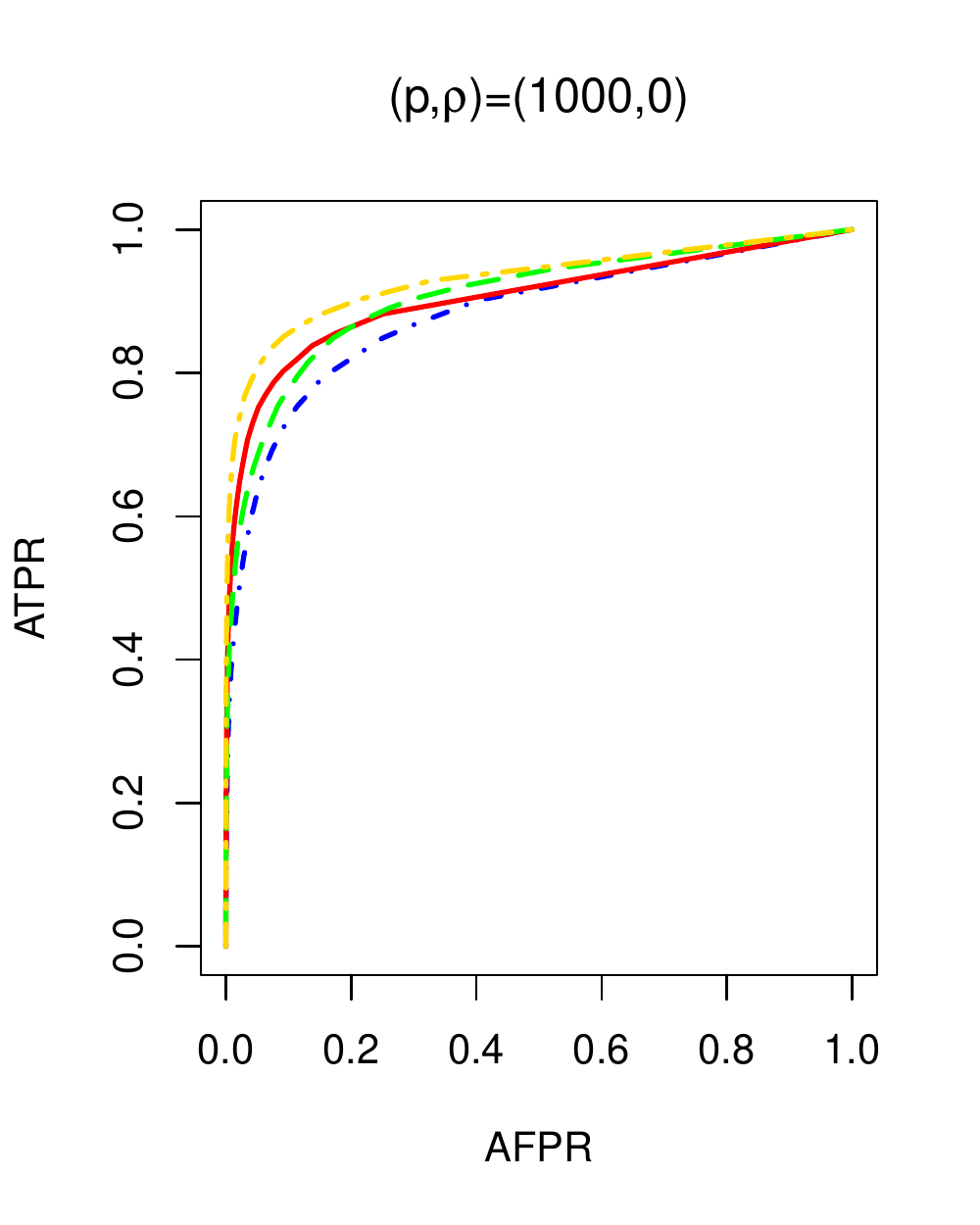}\\
			\includegraphics[scale=0.33 ]{./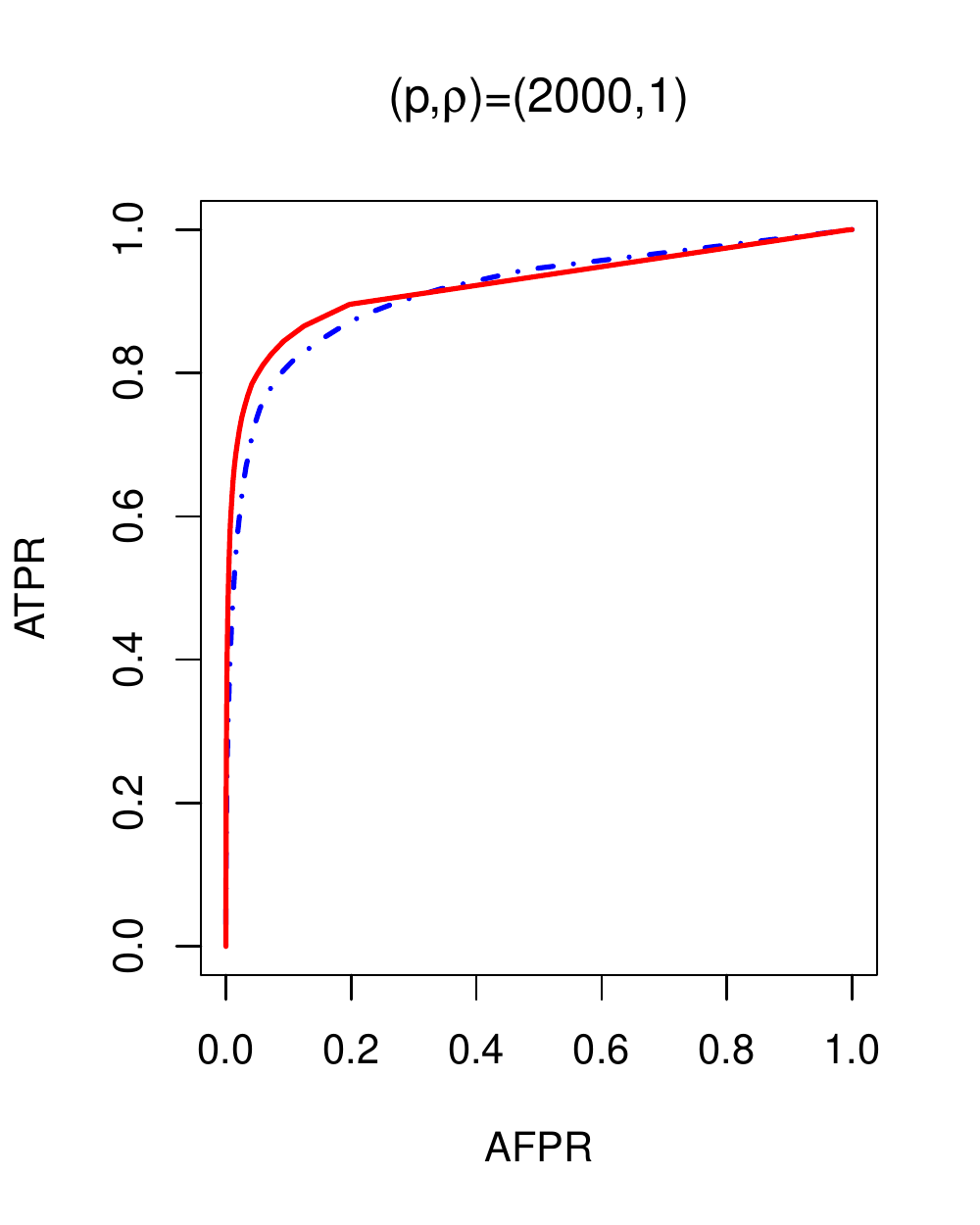} & \includegraphics[scale=0.33 ]{./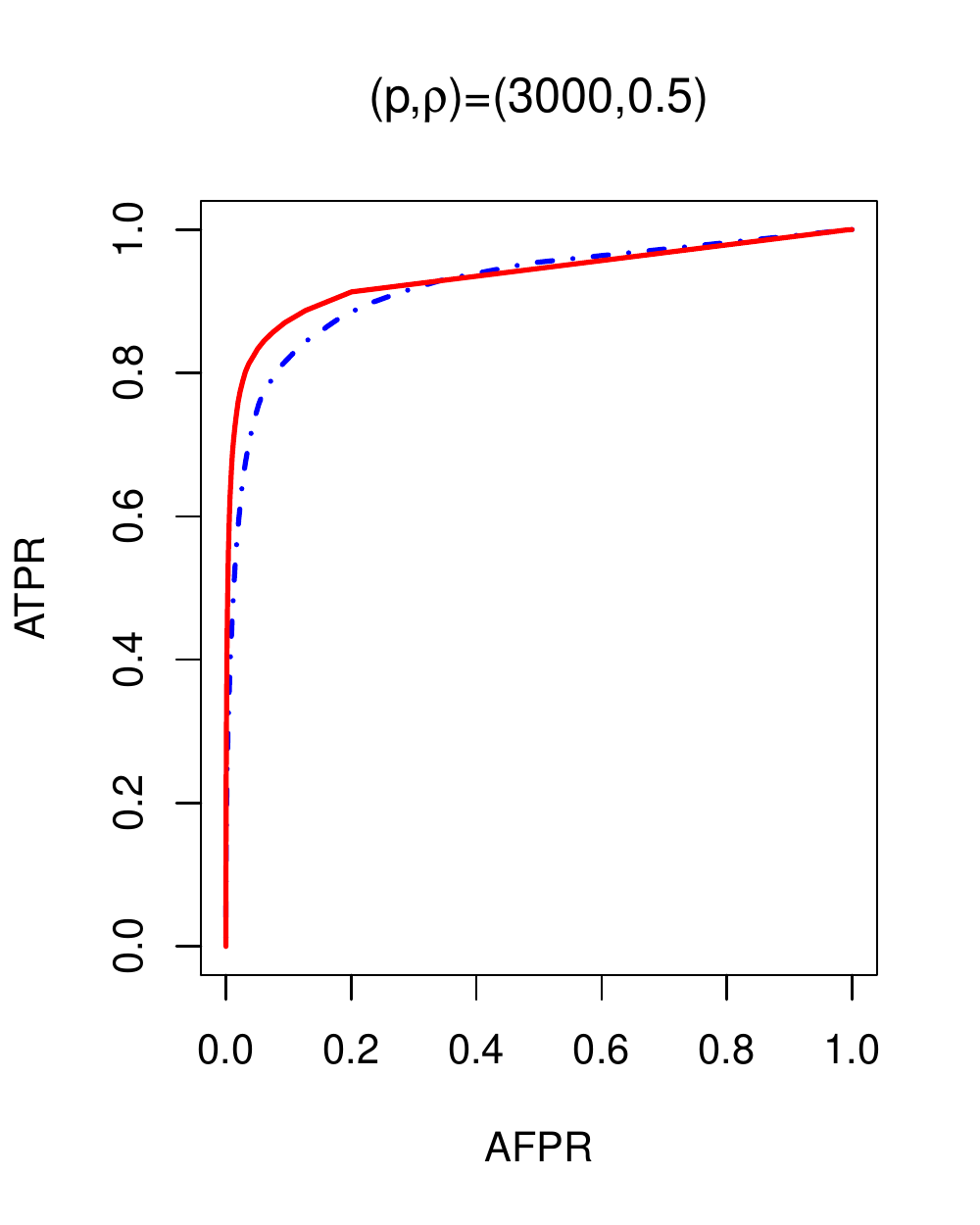} & 
			\includegraphics[scale=0.33 ]{./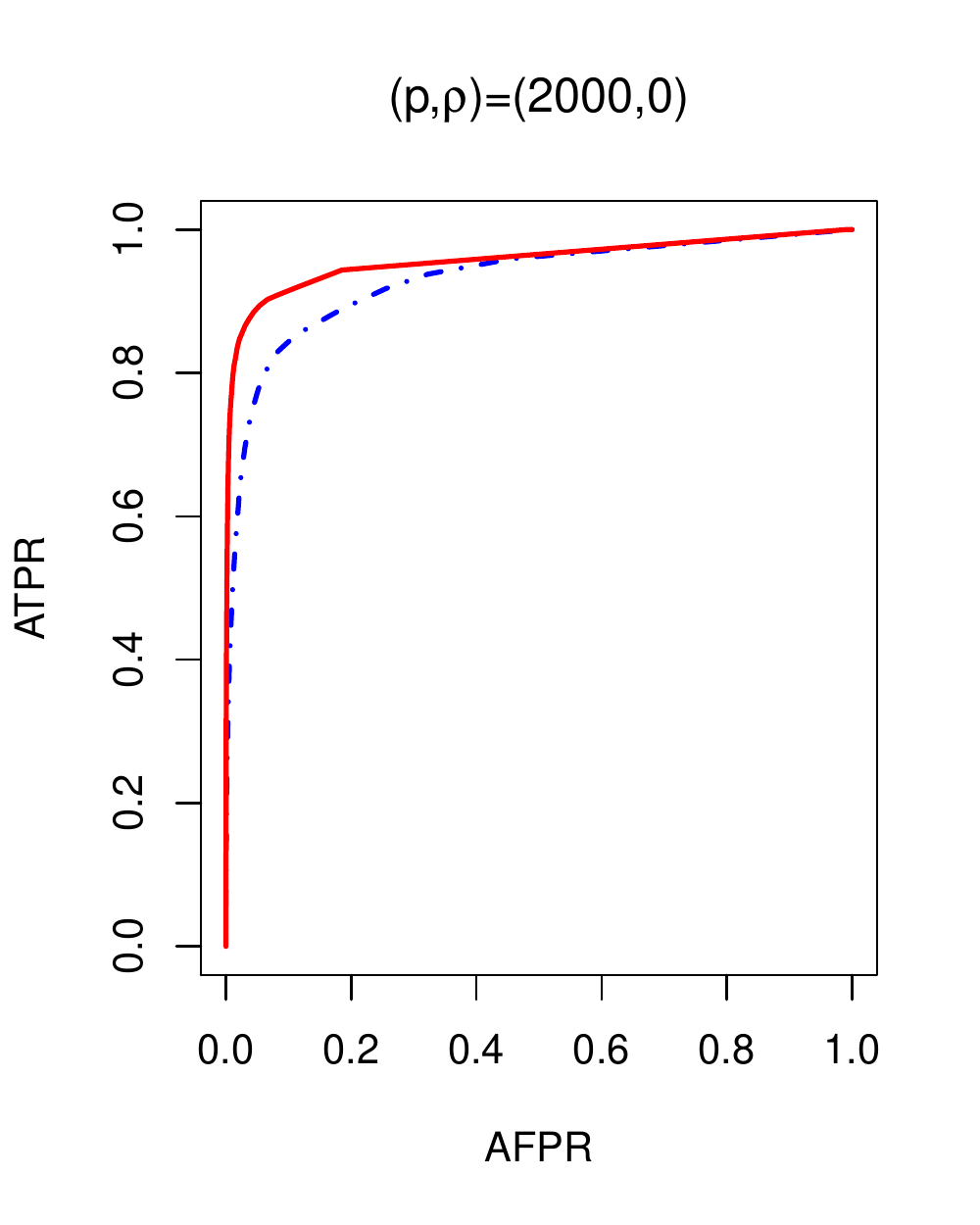}\\
			\includegraphics[scale=0.33 ]{./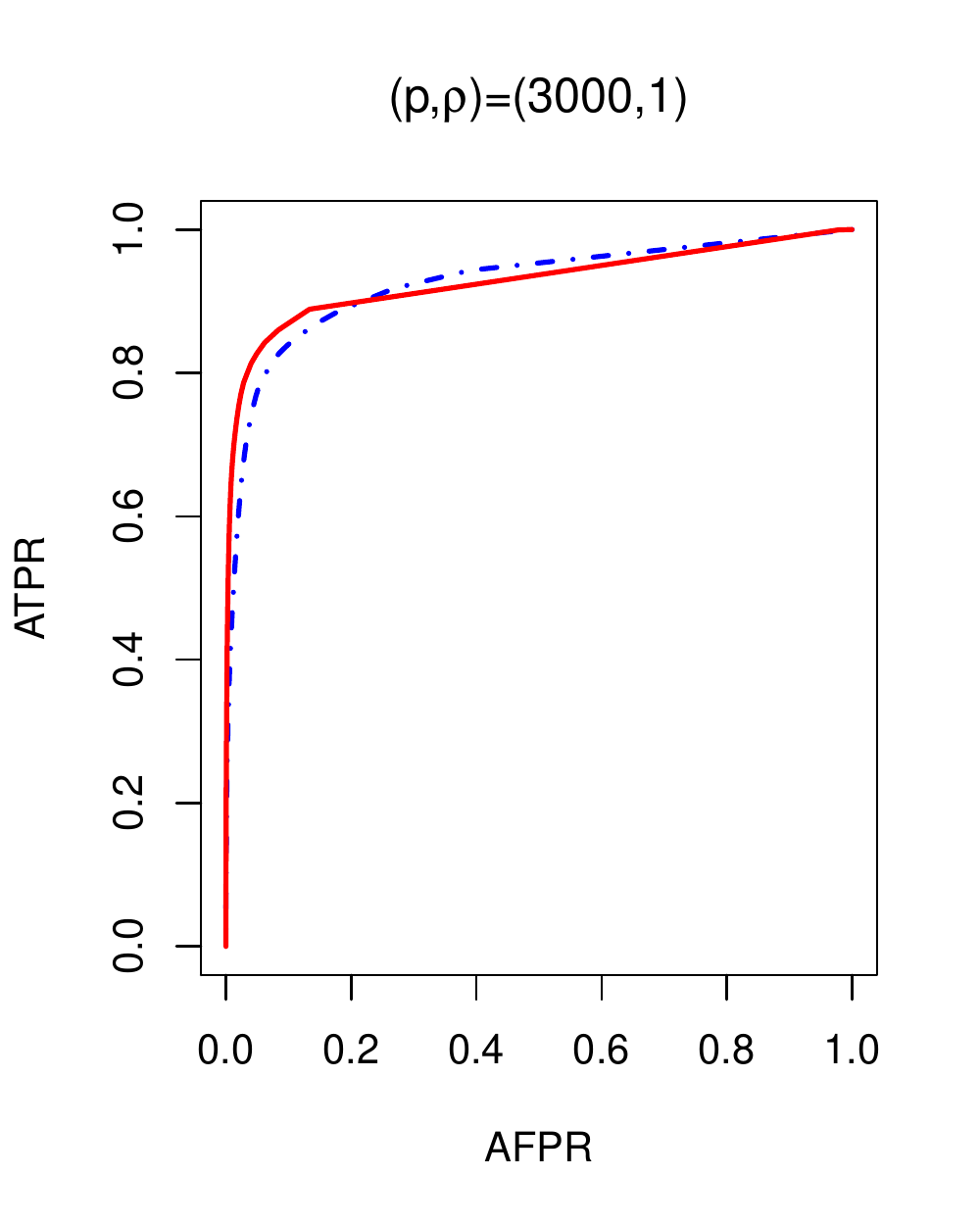} & \includegraphics[scale=0.33 ]{./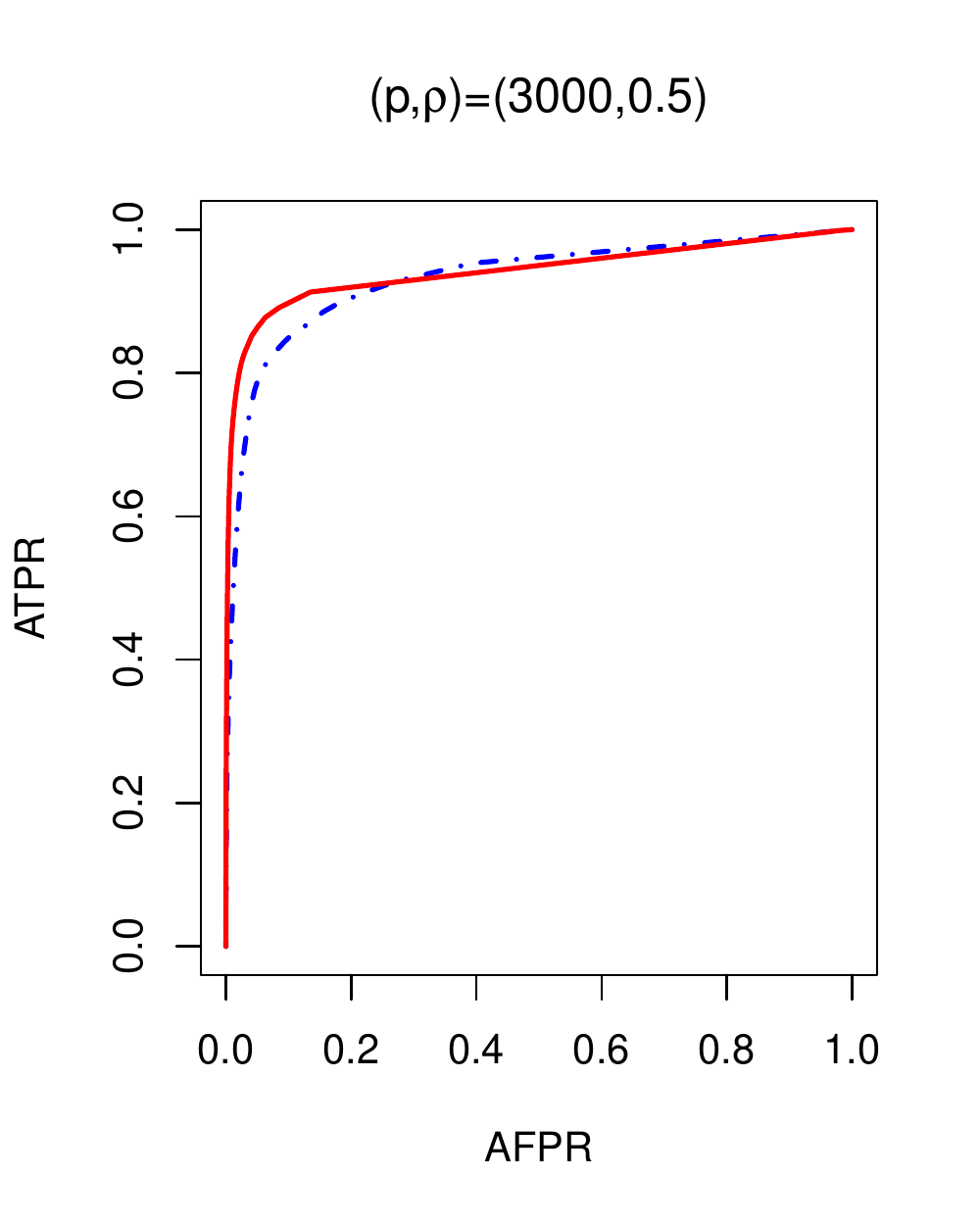} & \includegraphics[scale=0.33 ]{./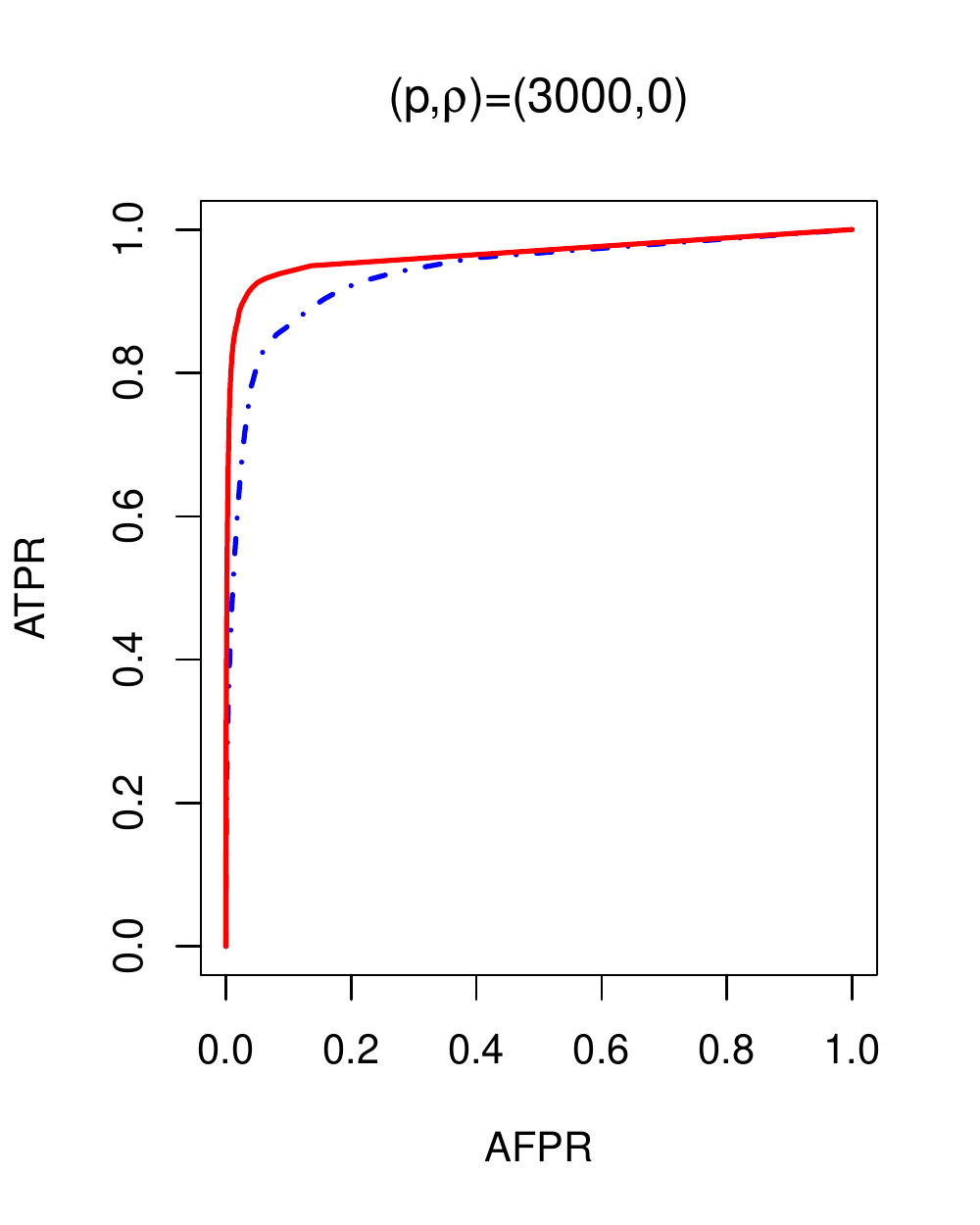}
		\end{tabular}
		\caption{ROC curves by SNS (red solid line), INS (blue dashed line), JGL (golden dashed line) and IGL (green dashed line) for different combinations of $(p,\rho)$. Note that for $p=2000,3000$, the computation of ROC results are only feasible for INS and SNS.}
		\label{figure:1}
	\end{figure}

	\begin{table}
		\centering
		\begin{tabular}{|M{2cm}|M{2cm}|M{2cm}|M{2cm}|M{2cm}|}
			\hline
			\multirow{2}{4em}{\centering$p$}   &\multirow{2}{4em}{\centering Method}&\multicolumn{3}{c|}{$\rho$}\\\cline{3-5}
			&&1 &0.5&0\\
			\hline
			\multirow{4}{4em}{\begin{center}100\end{center}}&SNS& 0.86  & 0.91 & 0.94 \\\cline{2-5}
			& INS& 0.84& 0.88& 0.91   \\\cline{2-5}
			& JGL& 0.90& 0.94& 0.96   \\\cline{2-5}
			& IGL& 0.87& 0.90& 0.93   \\
			\hline		
			\multirow{4}{4em}{\begin{center}1000\end{center}}&SNS& 0.88  & 0.89 & 0.90 \\\cline{2-5}
			& INS& 0.86& 0.87& 0.88   \\\cline{2-5}
			& JGL& 0.91& 0.92& 0.93   \\\cline{2-5}
			& IGL& 0.89& 0.90& 0.91   \\
			\hline		
			\multirow{2}{4em}{\centering2000}&SNS& 0.92  & 0.93 & 0.96 \\\cline{2-5}
			& INS& 0.91& 0.92 & 0.93   \\
			\hline		
			\multirow{2}{4em}{\centering3000}&SNS& 0.93  & 0.94 & 0.97 \\\cline{2-5}
			& INS& 0.92& 0.93 & 0.94   \\\cline{2-5}
			\hline
		\end{tabular}
		\caption{Table of the areas under the ROC curves from Figure \ref*{figure:1}. }
		\label{tabel:1}
	\end{table}

	\begin{table}
		\centering
		\begin{tabular}{|M{2cm}|M{2cm}|M{2cm}|M{2cm}|M{2cm}|}
			\hline
			\multirow{2}{4em}{\centering Method}&\multicolumn{4}{c|}{$p$}\\\cline{2-5}	
			& 100 & 1000 & 2000 & 3000\\
			\hline
			SNS&0.191 & 6.262 & 27.352 & 91.848\\ 
			\hline
			JGL&0.086 & 14.764 & 124.376 & 513.693\\
			\hline
		\end{tabular}
		\caption{CPU times for one iteration of the ADMM algorithm for the SNS and JGL for different numbers of features.}	
		\label{tabel:2}
	\end{table}
	
	To summarize the results, SNS performs significantly better than INS, due to the fact that the former takes advantage of the common structure. As expected, SNS does not perform as well as JGL. This is because the loss function of the SNS is a second-order approximation of the loss function of JGL. \cite{yuan2007model} discussed this point in the context of estimating a single graph. However, SNS requires significantly less computing time than JGL or IGL as reported in Table \ref{tabel:2}. Note that we did not test the computing time of IGL, because it has the same ADMM algorithm as JGL with the only difference that the weights are equal to 1. The performance of SNS and IGL is comparable but as with JGL it is infeasible to estimate the graph for large number of variables, such as the dataset we analyze next.

	\section{Application}	
	In this section we apply our method on the dataset mentioned in the introduction, which can be found in the GEO DataSet Browser under the name GDS2771. It consists of data from large airway epithelial cells. The data were collected from three subpopulations of smokers: with lung cancer, without lung cancer and with suspected lung cancer. The three subpopulations consist of 97, 90 and 5 subjects, respectively.
	
	We removed the subjects with suspected lung cancer, since our goal is not fitting a regression model to predict which of these subjects have cancer and which do not. We also removed all the duplicate variables along with variables that contained missing values. The remaining dataset consist of 14062 variables. Finally, we centered and scaled our data to match the input of the SNS.
	
	To get an initial estimate for $\mathbf{\Theta}$ in \eqref{obj:5} we applied neighborhood selection separately to each subpopulation. We set a $\lambda$ relatively small (0.05), so that we have a rich structure from which to select edges with our simultaneous estimation method. After getting the initial estimate, we applied our method to the dataset with a $\lambda=50$ to get a sparse graph. The first reason why $\lambda$ has to be this high is that the sample size (187) is very small compared to the size of the graph (14062), which requires strong regularization. The second reason is that we were only interested in a good visual representation of the graph of the dataset, which favored an interpretable sparse result.

	\begin{figure}
		\centering
		\begin{tabular}{c}
			\includegraphics[scale=0.4]{./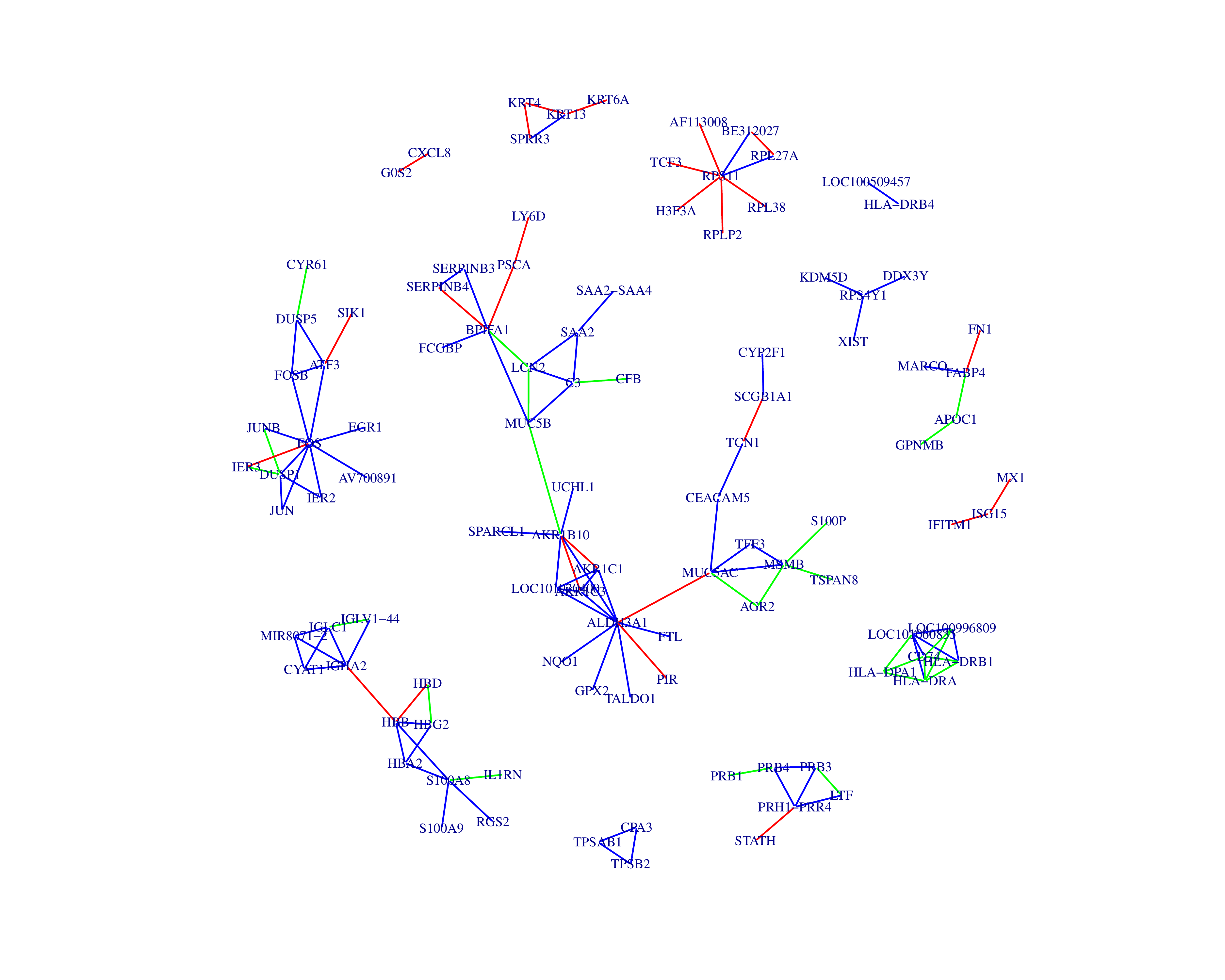}
		\end{tabular}
		\caption{Graph of lung cancer dataset, recovered with the simultaneous neighborhood selection method. Edges with blue color are common to both subpopulations, edges with green color are exclusive to the subpopulation with lung cancer and edges with red are exclusive to the subpopulation without lung cancer.}	
	\end{figure}

	\section{Appendix}
	Define the intermediate objective function:
		\begin{align}\label{obj:2}	
			\frac{1}{2n}\sum_{k=1}^{K}||\mathds{X}^{(k)}(\mathbf{I}-\mathbf{\Theta}^{(k)})||_{F}^{2}+||\mathbf{H}||_{1}+\lambda_{1}\lambda_{2}\sum_{k=1}^{K}||\mathbf{\Gamma}^{(k)}||_{1},
		\end{align}
		for $\mathbf{H}$ and $\mathbf{\Gamma}$ specified in subsection \ref{SNS}. We first show that the objective functions \eqref{obj:1} and \eqref{obj:2} are equivalent.
		\bigskip
		\begin{lemma}\label{lem:1}
			If $(\hat{\mathbf{H}},\hat{\mathbf{\Gamma}})$ is a local minimizer of \eqref{obj:1}, then there exists a local minimizer $(\tilde{\mathbf{H}},\tilde{\mathbf{\Gamma}})$ of \eqref{obj:2} such that $\tilde{\mathbf{H}}\circ\tilde{\mathbf{\Gamma}}^{(k)}=\hat{\mathbf{H}}\circ\hat{\mathbf{\Gamma}}^{(k)}$ for all $k$. Conversely, if $(\tilde{\mathbf{H}},\tilde{\mathbf{\Gamma}})$ is a local minimizer of \eqref{obj:2}, then there exists a local minimizer $(\hat{\mathbf{H}},\hat{\mathbf{\Gamma}})$ of \eqref{obj:1} such that $\hat{\mathbf{H}}\circ\hat{\mathbf{\Gamma}}^{(k)}=\tilde{\mathbf{H}}\circ\tilde{\mathbf{\Gamma}}^{(k)}$ for all $k$.
	\end{lemma}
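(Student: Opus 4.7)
The key observation is that the two objective functions \eqref{obj:1} and \eqref{obj:2} are related by an elementary positive rescaling of the factors. Specifically, since $\mathbf{H}\circ\mathbf{\Gamma}^{(k)}$ is invariant under any substitution of the form $(\mathbf{H},\mathbf{\Gamma}^{(k)})\mapsto(c\mathbf{H},c^{-1}\mathbf{\Gamma}^{(k)})$ with $c>0$, the data-fit term of both objectives depends only on this Hadamard product. By choosing $c$ correctly we can convert the coefficients $\lambda_1$ and $\lambda_2$ in \eqref{obj:1} into the coefficients $1$ and $\lambda_1\lambda_2$ in \eqref{obj:2}. My plan is therefore to exhibit an explicit homeomorphism of the feasible region that intertwines the two objective functions, and then invoke the fact that homeomorphisms preserve local minimizers.

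Concretely, let $\mathcal{F}$ denote the feasible region, that is, the set of pairs $(\mathbf{H},\mathbf{\Gamma})$ with $\eta_{lj}\ge 0$, $\eta_{jj}=0$, and $\gamma_{jj}^{(k)}=0$ for all $j,k$. Assuming $\lambda_1>0$, I would define the map
\begin{align*}
\Phi:\mathcal{F}\to\mathcal{F},\qquad \Phi(\mathbf{H},\mathbf{\Gamma})=(\lambda_1\mathbf{H},\lambda_1^{-1}\mathbf{\Gamma}),
\end{align*}
and check three things: (i) $\Phi$ is well-defined into $\mathcal{F}$ (nonnegativity of $\mathbf{H}$ and the zero-diagonal constraints are preserved because $\lambda_1>0$); (ii) $\Phi$ is a linear homeomorphism with inverse $\Phi^{-1}(\tilde{\mathbf{H}},\tilde{\mathbf{\Gamma}})=(\lambda_1^{-1}\tilde{\mathbf{H}},\lambda_1\tilde{\mathbf{\Gamma}})$; and (iii) if $F_1$ and $F_2$ denote the objectives in \eqref{obj:1} and \eqref{obj:2}, then $F_2\circ\Phi=F_1$. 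Point (iii) is a two-line calculation: the data-fit term is unchanged because $(\lambda_1\mathbf{H})\circ(\lambda_1^{-1}\mathbf{\Gamma}^{(k)})=\mathbf{H}\circ\mathbf{\Gamma}^{(k)}$, while the penalty terms transform as $\|\lambda_1\mathbf{H}\|_1=\lambda_1\|\mathbf{H}\|_1$ and $\lambda_1\lambda_2\|\lambda_1^{-1}\mathbf{\Gamma}^{(k)}\|_1=\lambda_2\|\mathbf{\Gamma}^{(k)}\|_1$.

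Given (i)--(iii), if $(\hat{\mathbf{H}},\hat{\mathbf{\Gamma}})$ is a local minimizer of $F_1$, then $(\tilde{\mathbf{H}},\tilde{\mathbf{\Gamma}}):=\Phi(\hat{\mathbf{H}},\hat{\mathbf{\Gamma}})$ is a local minimizer of $F_2$: indeed, any neighborhood of $(\tilde{\mathbf{H}},\tilde{\mathbf{\Gamma}})$ in $\mathcal{F}$ pulls back under the continuous map $\Phi^{-1}$ to a neighborhood of $(\hat{\mathbf{H}},\hat{\mathbf{\Gamma}})$, and on this pullback $F_1\le F_1(\hat{\mathbf{H}},\hat{\mathbf{\Gamma}})=F_2(\tilde{\mathbf{H}},\tilde{\mathbf{\Gamma}})$, so $F_2=F_1\circ\Phi^{-1}\le F_2(\tilde{\mathbf{H}},\tilde{\mathbf{\Gamma}})$ on the original neighborhood. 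The identity $\tilde{\mathbf{H}}\circ\tilde{\mathbf{\Gamma}}^{(k)}=\hat{\mathbf{H}}\circ\hat{\mathbf{\Gamma}}^{(k)}$ is immediate from the definition of $\Phi$. The converse direction is identical, using $\Phi^{-1}$ in place of $\Phi$.

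I do not anticipate a serious obstacle; the main point requiring care is simply to verify that ``local'' is interpreted in the relative topology on $\mathcal{F}$ so that the homeomorphism argument applies cleanly, and to flag that the argument tacitly assumes $\lambda_1>0$ (the case $\lambda_1=0$ trivializes the penalty on $\mathbf{H}$ and is not the intended regime).
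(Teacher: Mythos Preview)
Your approach is essentially identical to the paper's: both use the rescaling $(\tilde{\mathbf{H}},\tilde{\mathbf{\Gamma}})=(\lambda_1\hat{\mathbf{H}},\lambda_1^{-1}\hat{\mathbf{\Gamma}})$ and the identity $Q_2(\lambda_1\lambda_2,\lambda_1\mathbf{H},\lambda_1^{-1}\mathbf{\Gamma})=Q_1(\lambda_1,\lambda_2,\mathbf{H},\mathbf{\Gamma})$, with the paper carrying out the $\delta$--$\delta^*$ neighborhood verification explicitly (taking $\delta^*\le\delta\min(\lambda_1,\lambda_1^{-1})$) where you invoke the homeomorphism argument abstractly. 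One small slip to fix: your inequalities in the local-minimizer step are reversed---you want $F_1\ge F_1(\hat{\mathbf{H}},\hat{\mathbf{\Gamma}})$ on the pullback neighborhood, hence $F_2\ge F_2(\tilde{\mathbf{H}},\tilde{\mathbf{\Gamma}})$ on the original neighborhood.
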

	\begin{proof}	 	 
		Let $Q_{1}(\lambda_{1},\lambda_{2},\mathbf{H},\mathbf{\Gamma})$ and $Q_{2}(\lambda_{1}\lambda_{2},\mathbf{H},\mathbf{\Gamma})$ denote the objective functions \eqref{obj:1} and \eqref{obj:2}, respectively. Observe that
		\begin{align*}
			Q_{1}(\lambda_{1},\lambda_{2},\mathbf{H},\mathbf{\Gamma})&=Q_{2}(\lambda_{1}\lambda_{2},\lambda_{1}\mathbf{H},\lambda_{1}^{-1}\mathbf{\Gamma}),\\
			Q_{2}(\lambda_{1}\lambda_{2},\mathbf{H},\mathbf{\Gamma})&=Q_{1}(\lambda_{1},\lambda_{2},\lambda_{1}^{-1}\mathbf{H},\lambda_{1}\mathbf{\Gamma}).
		\end{align*}
		Since $(\hat{\mathbf{H}},\hat{\mathbf{\Gamma}})$ is a local minimizer of $Q_{1}(\lambda_{1},\lambda_{2},\cdot,\cdot)$, there exists $\delta>0$ such that for every $(\mathbf{H},\mathbf{\Gamma})$ with
		\begin{align*}
			||\mathbf{H}-\hat{\mathbf{H}}||_{1}+\sum_{k=1}^{K}||\mathbf{\Gamma}^{(k)}-\hat{\mathbf{\Gamma}}^{(k)}||_{1}<\delta,
		\end{align*}
		we have
		\begin{align*}
			Q_{1}(\lambda_{1},\lambda_{2},\hat{\mathbf{H}},\hat{\mathbf{\Gamma}})\leq Q_{1}(\lambda_{1},\lambda_{2},\mathbf{H},\mathbf{\Gamma}).
		\end{align*}
		Let $0<\delta^{*}\leq \delta\min(\lambda_{1},\lambda_{1}^{-1})$, and define $(\tilde{\mathbf{H}},\tilde{\mathbf{\Gamma}})=(\lambda_{1}\hat{\mathbf{H}},\lambda_{1}^{-1}\hat{\mathbf{\Gamma}})$. Then, for any $(\mathbf{H},\mathbf{\Gamma})$ satisfying
		\begin{align*}
			||\mathbf{H}-\tilde{\mathbf{H}}||_{1}+\sum_{k=1}^{K}||\mathbf{\Gamma}^{(k)}-\tilde{\mathbf{\Gamma}}^{(k)}||_{1}<\delta^{*},
		\end{align*}
		we have
		\begin{align*}
			||\lambda_{1}^{-1}\mathbf{H}-\hat{\mathbf{H}}||_{1}+\sum_{k=1}^{K}||\lambda_{1}\mathbf{\Gamma}^{(k)}-\hat{\mathbf{\Gamma}}^{(k)}||_{1}\leq \frac{||\mathbf{H}-\tilde{\mathbf{H}}||_{1}+\sum_{k=1}^{K}||\mathbf{\Gamma}^{(k)}-\tilde{\mathbf{\Gamma}}^{(k)}||_{1}}{\min(\lambda_{1},\lambda_{1}^{-1})}\leq \delta.
		\end{align*}
		Thus
		\begin{align*}
			Q_{1}(\lambda_{1},\lambda_{2},\hat{\mathbf{H}},\hat{\mathbf{\Gamma}})\leq Q_{1}(\lambda_{1},\lambda_{2},\lambda_{1}^{-1}\mathbf{H},\lambda_{1}\mathbf{\Gamma})\Rightarrow Q_{2}(\lambda_{1}\lambda_{2},\tilde{\mathbf{H}},\tilde{\mathbf{\Gamma}})\leq Q_{2}(\lambda_{1}\lambda_{2},\mathbf{H},\mathbf{\Gamma}),
		\end{align*}
		which means that $(\tilde{\mathbf{H}},\tilde{\mathbf{\Gamma}})$ is a local minimizer of \eqref{obj:2}. The other direction is proven similarly.
	\end{proof}
	
	\bigskip
	\setcounter{lemma}{1}
	\begin{lemma}\label{lem:2}
		Suppose $(\hat{\mathbf{H}},\hat{\mathbf{\Gamma}})$ is a local minimizer of \eqref{obj:2} and $\hat{\mathbf{\Theta}}^{(k)}=\hat{\mathbf{H}}\circ\hat{\mathbf{\Gamma}}^{(k)}$ for all $k$. Then for any $l,j$ the following statements hold true:
		\begin{enumerate}
			\item $\hat{\eta}_{lj}=0$ if and only if $\hat{\gamma}_{lj}^{(k)}=0$ for all $k=1,\ldots,K$;
			\item If $\hat{\eta}_{lj}\ne 0$, then $\hat{\eta}_{lj}=\left(\lambda_{1}\lambda_{2}\sum_{k=1}^{K}|\hat{\theta}_{lj}(k)|\right)^{1/2}$.
		\end{enumerate}
	\end{lemma}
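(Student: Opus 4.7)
The plan is to exploit the fact that the fit term $\tfrac{1}{2n}\sum_k\|\mathds{X}^{(k)}(\mathbf{I}-\mathbf{\Theta}^{(k)})\|_F^2$ depends on $(\mathbf{H},\mathbf{\Gamma})$ only through the products $\mathbf{\Theta}^{(k)}=\mathbf{H}\circ\mathbf{\Gamma}^{(k)}$, and to build, in each case, a one-parameter perturbation of $(\hat{\mathbf{H}},\hat{\mathbf{\Gamma}})$ that leaves every $\mathbf{\Theta}^{(k)}$ unchanged (hence leaves the loss unchanged) and for which the penalty becomes an easy one-variable function. Local minimality then forces that function to be minimized at the unperturbed point, and the two conclusions of the lemma pop out.

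For part 1, assume first that $\hat{\eta}_{lj}=0$ but some $\hat{\gamma}_{lj}^{(k_0)}\neq 0$. Because $\hat{\eta}_{lj}=0$, all $\hat{\theta}_{lj}^{(k)}$ vanish regardless of the $\hat{\gamma}_{lj}^{(k)}$, so perturbing $\hat{\gamma}_{lj}^{(k_0)}$ by a small amount $-\epsilon\,\sgn(\hat{\gamma}_{lj}^{(k_0)})$ keeps every $\mathbf{\Theta}^{(k)}$ fixed while strictly decreasing $\lambda_1\lambda_2\sum_k\|\mathbf{\Gamma}^{(k)}\|_1$, contradicting local minimality. For the converse, suppose all $\hat{\gamma}_{lj}^{(k)}=0$ but $\hat{\eta}_{lj}>0$. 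Again $\hat{\theta}_{lj}^{(k)}=0$ for every $k$, so decreasing $\hat{\eta}_{lj}$ by a small $\epsilon>0$ (staying in the admissible set $\eta_{lj}\ge 0$) preserves every $\mathbf{\Theta}^{(k)}$ and strictly decreases $\|\mathbf{H}\|_1$, again a contradiction.

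For part 2, suppose $\hat{\eta}_{lj}>0$. Consider the rescaling $\eta_{lj}(\epsilon)=\hat{\eta}_{lj}+\epsilon$ and $\gamma_{lj}^{(k)}(\epsilon)=\hat{\gamma}_{lj}^{(k)}\,\hat{\eta}_{lj}/(\hat{\eta}_{lj}+\epsilon)$ for $|\epsilon|$ small, leaving all other entries of $\mathbf{H},\mathbf{\Gamma}$ fixed. This keeps every $\theta_{lj}^{(k)}$ equal to $\hat{\theta}_{lj}^{(k)}$, so the loss term and every other term of the penalty is constant, and the objective reduces to the constant plus
\begin{equation*}
f(\eta) \;=\; \eta \;+\; \lambda_1\lambda_2\,\eta^{-1}\sum_{k=1}^{K}|\hat{\theta}_{lj}^{(k)}|,\qquad \eta>0,
\end{equation*}
evaluated at $\eta=\hat{\eta}_{lj}+\epsilon$. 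Local minimality at $\epsilon=0$ forces $f'(\hat{\eta}_{lj})=0$, which gives $\hat{\eta}_{lj}^{2}=\lambda_1\lambda_2\sum_k|\hat{\theta}_{lj}^{(k)}|$ and hence the stated identity. (The sum is strictly positive, since if it vanished then all $\hat{\gamma}_{lj}^{(k)}$ would be zero and part 1 would give $\hat{\eta}_{lj}=0$, a contradiction.)

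I expect part 1 to be essentially routine, and the only mild subtlety to be in part 2, where one must produce a perturbation that keeps $\mathbf{\Theta}$ exactly constant so that the reduction to the scalar optimization of $f(\eta)$ is clean; the rescaling $\gamma_{lj}^{(k)}\mapsto\gamma_{lj}^{(k)}\hat{\eta}_{lj}/\eta_{lj}$ does exactly this, and is well-defined in a neighborhood of $\epsilon=0$ because $\hat{\eta}_{lj}>0$.
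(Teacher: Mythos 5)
Your proof is correct and follows essentially the same route as the paper's: both parts rest on perturbations of $(\hat{\mathbf{H}},\hat{\mathbf{\Gamma}})$ that leave every $\mathbf{\Theta}^{(k)}$ (hence the loss term) fixed, reducing local minimality to a one-dimensional statement about the penalty. The only cosmetic difference is in part 2, where the paper rules out $c\neq 1$ by exhibiting a multiplicative rescaling $(\hat{\eta}_{lj},\hat{\gamma}_{lj}^{(k)})\mapsto(\delta^{*}\hat{\eta}_{lj},\hat{\gamma}_{lj}^{(k)}/\delta^{*})$ that strictly decreases the objective, whereas you parametrize the same product-preserving rescaling continuously and invoke the first-order condition for $\eta\mapsto\eta+\lambda_{1}\lambda_{2}\eta^{-1}\sum_{k}|\hat{\theta}_{lj}^{(k)}|$; the two arguments are equivalent.
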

	
	\begin{proof}
		1. If $l=j$, by definition $\gamma_{jj}^{(k)}=\eta_{jj}=0$ for all $k$. Suppose then $l\ne j$ and $\eta_{lj}=0$, then $\gamma_{lj}^{(1)},\ldots,\gamma_{lj}^{(K)}$ only appear in the third term in \eqref{obj:2}. Thus, in order to minimize $Q_{2}(\lambda_{1}\lambda_{2},\cdot,\cdot)$, we need $\gamma_{lj}^{(k)}=0$ for all $k=1,\ldots,K$. The reverse implication can be proved similarly.
		
		2. Suppose $\hat{\eta}_{lj}\ne 0$ and let
		\begin{align*}
			c=\frac{\left(\lambda_{1}\lambda_{2}\sum_{k=1}^{K}|\hat{\theta}_{lj}(k)|\right)^{1/2}}{\hat{\eta}_{lj}}.
		\end{align*}
		We will show $c=1$. By definition
		\begin{align*}
			\hat{\gamma}_{lj}^{(k)}=c\frac{\hat{\theta}_{lj}(k)}{\left(\lambda_{1}\lambda_{2}\sum_{k=1}^{K}|\hat{\theta}_{lj}(k)|\right)^{1/2}}.
		\end{align*}
		Suppose $c>1$. Since $(\hat{\mathbf{H}},\hat{\mathbf{\Gamma}})$ is a local minimizer of $Q_{2}(\lambda_{1}\lambda_{2},\cdot,\cdot)$, there exists $\delta>0$ such that for all $(\mathbf{H},\mathbf{\Gamma})$ with
		\begin{align*}
			||\mathbf{H}-\hat{\mathbf{H}}||_{1}+\sum_{k=1}^{K}||\mathbf{\Gamma}^{(k)}-\hat{\mathbf{\Gamma}}^{(k)}||_{1}<\delta,
		\end{align*}
		we have $Q_{2}(\lambda_{1}\lambda_{2},\hat{\mathbf{H}},\hat{\mathbf{\Gamma}})\leq Q_{2}(\lambda_{1}\lambda_{2},\mathbf{H},\mathbf{\Gamma})$. Furthermore, there exists $\delta^{*}\in(1,c)$ slightly greater that 1, such that for $(\tilde{\mathbf{H}},\tilde{\mathbf{\Gamma}})$ defined by
		\begin{align*}
			\begin{cases*}
				\tilde{\eta}_{l'j'}=\hat{\eta}_{l'j'}\text{ and }\tilde{\gamma}_{l'j'}^{(k)}=\hat{\gamma}_{l'j'}^{(k)}, &$(l',j')\ne (l,j)$\\
				\tilde{\eta}_{lj}=\delta^{*}\hat{\eta}_{lj}\text{ and }\tilde{\gamma}_{lj}^{(k)}=\frac{1}{\delta^{*}}\hat{\gamma}_{lj}^{(k)}
			\end{cases*}
		\end{align*}
		we have
		\begin{align*}
			||\tilde{\mathbf{H}}-\hat{\mathbf{H}}||_{1}+\sum_{k=1}^{K}||\tilde{\mathbf{\Gamma}}^{(k)}-\hat{\mathbf{\Gamma}}^{(k)}||_{1}<\delta.
		\end{align*}
		But this implies
		\begin{align*}
			Q_{2}(\lambda_{1}\lambda_{2},\hat{\mathbf{H}},\hat{\mathbf{\Gamma}})-Q_{2}(\lambda_{1}\lambda_{2},\tilde{\mathbf{H}},\tilde{\mathbf{\Gamma}})&=(1-\delta^{*})\hat{\eta}_{lj}+\left(1-\frac{1}{\delta^{*}}\right)\lambda_{1}\lambda_{2}\sum_{k=1}^{K}|\hat{\gamma}_{lj}^{(k)}|\\
			&=\frac{1}{c}(\delta^{*}-1)\left(\frac{c^{2}}{\delta^{*}}-1\right)\left(\lambda_{1}\lambda_{2}\sum_{k=1}^{K}|\hat{\theta}_{lj}(k)|\right)^{1/2}>0,
		\end{align*}
		which is impossible because $(\hat{\mathbf{H}},\hat{\mathbf{\Gamma}})$ is a local minimizer. Hence $c\leq 1$. Following the same argument we can show $c\geq 1$. Thus $c=1$.
	\end{proof}

	\begin{lemma}\label{lem:3}
		If $\{a_{l}^{(k)}:l\in\mathcal{V}\}$ and $\{b_{l}^{(k)}:l\in\mathcal{V}\}$ satisfy
		\begin{align*}
			0<|b_{l}^{(k)}|<\frac{1}{K}\min\left\{|a_{l}^{(k)}|:a_{l}^{(k)}\ne 0,l\in\mathcal{V},k=1,\ldots,K\right\}
		\end{align*}
		for all $l,k$, then, for each $l\in\mathcal{V}$,
		\begin{align*}
			\left|\left(\sum_{k=1}^{K}|a_{l}^{(k)}+b_{l}^{(k)}|\right)^{1/2}-\left(\sum_{k=1}^{K}|b_{l}^{(k)}|\right)^{1/2}\right|\leq \frac{1}{2}\frac{\sum_{k=1}^{K}|b_{l}^{(k)}|}{\left(\sum_{k=1}^{K}|a_{l}^{(k)}|-\sum_{k=1}^{K}|b_{l}^{(k)}|\right)^{1/2}}.
		\end{align*}
	\end{lemma}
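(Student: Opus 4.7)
The plan is to linearize the square-root difference on the left-hand side via the secant identity
\[
\sqrt{X}-\sqrt{Y}=\frac{X-Y}{\sqrt{X}+\sqrt{Y}},
\]
applied with $X=\sum_{k=1}^{K}|a_{l}^{(k)}+b_{l}^{(k)}|$ and $Y=\sum_{k=1}^{K}|b_{l}^{(k)}|$ exactly as they appear in the statement. This reduces the problem to two routine estimates: an upper bound on $|X-Y|$ (which should give the numerator on the right-hand side) and a lower bound on $\sqrt{X}+\sqrt{Y}$ (which should give the denominator, including the factor of $2$).

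For the denominator, the hypothesis $|b_{l}^{(k)}|<K^{-1}\min|a_{l}^{(k)}|$ summed over $k$ yields $\sum_{k}|b_{l}^{(k)}|<\sum_{k}|a_{l}^{(k)}|$, so by the reverse triangle inequality
\[
\sum_{k}|a_{l}^{(k)}+b_{l}^{(k)}|\;\ge\;\sum_{k}|a_{l}^{(k)}|-\sum_{k}|b_{l}^{(k)}|\;>\;0,
\]
which makes the quantity under the square root on the right-hand side strictly positive and gives $\sqrt{X}\ge \sqrt{\sum_{k}|a_{l}^{(k)}|-\sum_{k}|b_{l}^{(k)}|}$. To produce the factor of $2$, one would want the same lower bound for $\sqrt{Y}$, which is immediate provided $Y$ is also at least $\sum_{k}|a_{l}^{(k)}|-\sum_{k}|b_{l}^{(k)}|$.

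The main obstacle is the numerator step. The natural termwise bound $\bigl||a+b|-|b|\bigr|\le|a|$ summed in $k$ gives only $|X-Y|\le\sum_{k}|a_{l}^{(k)}|$, which when divided by the denominator bound above produces a right-hand side with $\sum_{k}|a_{l}^{(k)}|$ in the numerator, not $\sum_{k}|b_{l}^{(k)}|$ as the lemma requires; and a simple check (say $K=1$, $a_{l}=1$, $b_{l}=1/2$) gives $\mathrm{LHS}=\sqrt{3/2}-\sqrt{1/2}\approx 0.52$ against $\mathrm{RHS}=\tfrac{1/2}{2\sqrt{1/2}}\approx 0.35$, so the stated inequality fails as literally written. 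I suspect a typographical slip in which the subtracted term on the LHS should read $\bigl(\sum_{k}|a_{l}^{(k)}|\bigr)^{1/2}$ rather than $\bigl(\sum_{k}|b_{l}^{(k)}|\bigr)^{1/2}$: with that replacement, the triangle inequality $\bigl||a+b|-|a|\bigr|\le|b|$ produces exactly $|X-Y|\le\sum_{k}|b_{l}^{(k)}|$, the symmetric lower bound $\sqrt{Y}=\sqrt{\sum_{k}|a_{l}^{(k)}|}\ge\sqrt{\sum_{k}|a_{l}^{(k)}|-\sum_{k}|b_{l}^{(k)}|}$ is automatic, and the whole argument closes in a couple of lines. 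My plan, therefore, is to prove the corrected version by the two-step scheme above and flag the discrepancy so the stated form can be reconciled with its intended use in the proof of Theorem~\ref{thm:2}.
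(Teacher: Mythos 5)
Your diagnosis is correct, and the paper's own proof confirms it: after the triangle-inequality sandwich $\sum_{k}|a_{l}^{(k)}|-\sum_{k}|b_{l}^{(k)}|\leq\sum_{k}|a_{l}^{(k)}+b_{l}^{(k)}|\leq\sum_{k}|a_{l}^{(k)}|+\sum_{k}|b_{l}^{(k)}|$, the displayed quantity the authors actually bound is
\begin{align*}
\left|\left(\sum_{k=1}^{K}|a_{l}^{(k)}+b_{l}^{(k)}|\right)^{1/2}-\left(\sum_{k=1}^{K}|a_{l}^{(k)}|\right)^{1/2}\right|,
\end{align*}
with $\bigl(\sum_{k}|a_{l}^{(k)}|\bigr)^{1/2}$ in the subtracted position, not $\bigl(\sum_{k}|b_{l}^{(k)}|\bigr)^{1/2}$ as in the lemma statement. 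The application inside the proof of Theorem~\ref{thm:2} (bounding $|\tilde{\eta}_{lj}-\hat{\eta}_{lj}|$ with $a=\hat{\theta}$ and $b=d$) likewise requires the $a$-version, so the statement contains exactly the typographical slip you identified, and your $K=1$, $a=1$, $b=1/2$ counterexample to the literal version is valid. Your proof of the corrected inequality is essentially the paper's: they reach the factor $\tfrac{1}{2}\bigl(\sum_{k}|a_{l}^{(k)}|-\sum_{k}|b_{l}^{(k)}|\bigr)^{-1/2}$ via the preliminary inequality $|\sqrt{y+x}-\sqrt{y}|\leq|\sqrt{y-x}-\sqrt{y}|\leq\tfrac{1}{2}x/\sqrt{y-x}$ applied to the two endpoints of the sandwich, whereas you rationalize $\sqrt{X}-\sqrt{Y}=(X-Y)/(\sqrt{X}+\sqrt{Y})$ directly and bound numerator and denominator separately; these are the same estimate packaged differently, and both close the argument. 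One small point to make explicit in your write-up: the hypothesis only forces $\sum_{k}|b_{l}^{(k)}|<\sum_{k}|a_{l}^{(k)}|$ when at least one $a_{l}^{(k)}\neq 0$ for the given $l$ (otherwise the right-hand side is undefined), which is the regime in which Theorem~\ref{thm:2} invokes the lemma ($\hat{\eta}_{lj}\neq 0$).
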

	\begin{proof}
		First note that for any $0<x<y$, we have
		\begin{align}\label{ineq:1}
			|\sqrt{y+x}-\sqrt{y}|\leq |\sqrt{y-x}-\sqrt{y}|\leq \frac{1}{2}\frac{x}{\sqrt{y-x}}.
		\end{align}
		By the triangular inequality, 
		\begin{align*}
			\sum_{k=1}^{K}|a_{l}^{(k)}|-\sum_{k=1}^{K}|b_{l}^{(k)}|\leq \sum_{k=1}^{K}|a_{l}^{(k)}+b_{l}^{(k)}|\leq \sum_{k=1}^{K}|a_{l}^{(k)}|+\sum_{k=1}^{K}|b_{l}^{(k)}|.
		\end{align*}
		Therefore, for each $l\in\mathcal{V}$,
		\begin{align*}
			\left|\left(\sum_{k=1}^{K}|a_{l}^{(k)}+b_{l}^{(k)}|\right)^{1/2}-\left(\sum_{k=1}^{K}|a_{l}^{(k)}|\right)^{1/2}\right|
		\end{align*}
		is less than or equal to the maximum of
		\begin{align*}
			\left|\left(\sum_{k=1}^{K}|a_{l}^{(k)}|-\sum_{k=1}^{K}|b_{l}^{(k)}|\right)^{1/2}-\left(\sum_{k=1}^{K}|a_{l}^{(k)}|\right)^{1/2}\right|,\\
		\end{align*}
		and
		\begin{align*}
			\left|\left(\sum_{k=1}^{K}|a_{l}^{(k)}|+\sum_{k=1}^{K}|b_{l}^{(k)}|\right)^{1/2}-\left(\sum_{k=1}^{K}|a_{l}^{(k)}|\right)^{1/2}\right|.
		\end{align*}
		According to \eqref{ineq:1}, the maximum is less than or equal to
		\begin{align*}
			\frac{1}{2}\frac{\sum_{k=1}^{K}|b_{l}^{(k)}|}{\left(\sum_{k=1}^{K}|a_{l}^{(k)}|-\sum_{k=1}^{K}|b_{l}^{(k)}|\right)^{1/2}},
		\end{align*}
		as desired.
	\end{proof}
	
	We are now ready to establish the equivalence between the objective functions \eqref{obj:1} and \eqref{obj:3}, for which it suffices to show the equivalence of \eqref{obj:3} and \eqref{obj:2}, as we do next.\\

	\begin{large}
		\noindent\textbf{Proof of Theorem \ref{thm:2}}
	\end{large}
	
	\begin{proof}
		Let $Q_{3}(\lambda_{1}\lambda_{2},\mathbf{\Theta})$ denote the objective function \eqref{obj:3}, and let also $\lambda=\lambda_{1}\lambda_{2}$. Suppose $(\hat{\mathbf{H}},\hat{\mathbf{\Gamma}})$ is a local minimizer of \eqref{obj:2}. Then, there exists $\delta>0$ such that for all $(\mathbf{H},\mathbf{\Gamma})$ with
		\begin{align*}
			||\mathbf{H}-\hat{\mathbf{H}}||_{1}+\sum_{k=1}^{K}||\mathbf{\Gamma}^{(k)}-\hat{\mathbf{\Gamma}}^{(k)}||_{1}<\delta,
		\end{align*}
		we have
		\begin{align*}
			Q_{2}(\lambda,\hat{\mathbf{H}},\hat{\mathbf{\Gamma}})\leq Q_{2}(\lambda,\mathbf{H},\mathbf{\Gamma}).
		\end{align*}
		Let $\hat{\mathbf{\Theta}}$ be the estimator associated with $(\hat{\mathbf{H}},\hat{\mathbf{\Gamma}})$, that is, $\hat{\mathbf{\Theta}}^{(k)}=\hat{\mathbf{H}}\circ\hat{\mathbf{\Gamma}}^{(k)}$ for all $k$. Before proceeding further, we need to define some constants. Let
		\begin{align*}
			a&=\frac{1}{K}\min\left\{|\hat{\theta}_{lj}(k)|:\hat{\theta}_{lj}(k)\ne 0,l,j\in\mathcal{V},k=1,\ldots,K\right\},\\
			b&=\max\left\{|\hat{\theta}_{lj}(k)|:l,j\in\mathcal{V},k=1,\ldots,K\right\},\\
			c&=2p^{2}\max\left\{\lambda^{1/2},\lambda^{-1/2},\left(\frac{\lambda}{2Ka}\right)^{1/2},\left(\frac{2}{\lambda K a}\right)^{1/2}+\left(\frac{b}{\lambda K^{2}a^{2}}\right)^{1/2}\right\}.
		\end{align*}	
		Let also $0<\delta^{*}\leq \min\left(\frac{Ka}{2},a,1,c^{-2}\delta^{2}\right)$ and $\mathbf{D}=\left(\mathbf{D}^{(1)},\ldots,\mathbf{D}^{(K)}\right)$, where $\mathbf{D}^{(k)}=(d_{lj}^{(k)})$	
		and satisfies	
		$0<|d_{l}^{(k)}|$ for all $l,k$ and $\sum_{k=1}^{K}||\mathbf{D}^{(k)}||_{1}<\delta^{*}$. Let $\tilde{\mathbf{\Theta}}=\hat{\mathbf{\Theta}}+\mathbf{D}$. Then
		\begin{align*}
			\sum_{k=1}^{K}||\tilde{\mathbf{\Theta}}^{(k)}-\hat{\mathbf{\Theta}}^{(k)}||_{1}<\delta^{*}.
		\end{align*}
		This means that $\tilde{\mathbf{\Theta}}$ is a generic element of the ball with radius $\delta^{*}$ and center $\hat{\mathbf{\Theta}}$.
		
		Define $(\tilde{\mathbf{H}},\tilde{\mathbf{\Gamma}})$ by
		\begin{align*}
			\tilde{\eta}_{lj}=\left(\lambda\sum_{k=1}^{K}|\hat{\theta}_{lj}+d_{lj}^{(k)}|\right)^{1/2}\quad\text{and}\quad\tilde{\gamma}_{lj}^{(k)}=\frac{\hat{\theta}_{lj}(k)+d_{lj}^{(k)}}{\left(\lambda\sum_{k=1}^{K}|\hat{\theta}_{lj}(k)+d_{lj}^{(k)}|\right)^{1/2}}
		\end{align*}
		for all $l,j,k$. By Lemma \ref{lem:2},
		\begin{align*}
			Q_{2}(\lambda,\hat{\mathbf{H}},\hat{\mathbf{\Gamma}})=Q_{3}(\lambda,\hat{\mathbf{\Theta}}),
		\end{align*}
		and by the definition of $(\tilde{\mathbf{H}},\tilde{\mathbf{\Gamma}})$,
		\begin{align*}
			Q_{2}(\lambda,\tilde{\mathbf{H}},\tilde{\mathbf{\Gamma}})=Q_{3}(\lambda,\tilde{\mathbf{\Theta}}).
		\end{align*}
		If $\hat{\eta}_{lj}=0$, then $\hat{\gamma}_{lj}^{(k)}=\hat{\theta}_{lj}(k)=0$ for all $k$. Hence
		\begin{align*}
			|\tilde{\eta}_{lj}-\hat{\eta}_{lj}|=\left(\lambda\sum_{k=1}^{K}|d_{lj}^{(k)}|\right)^{1/2}\leq \left(\lambda\sum_{k=1}^{K}||\mathbf{D}^{(k)}||_{1}\right)^{1/2}<\lambda^{1/2}\delta^{*1/2}
		\end{align*}
		and
		\begin{align*}
			\sum_{k=1}^{K}|\tilde{\gamma}_{lj}^{(k)}-\hat{\gamma}_{lj}^{(k)}|=\frac{\sum_{k=1}^{K}|d_{lj}^{(k)}|}{\left(\lambda\sum_{k=1}^{K}|d_{lj}^{(k)}|\right)^{1/2}}\leq\lambda^{-1/2}\left(\sum_{k=1}^{K}||\mathbf{D}^{(k)}||_{1}\right)^{1/2}<\lambda^{-1/2}\delta^{*1/2}
		\end{align*}
		If $\hat{\eta}_{lj}\ne 0$, then by Lemma \ref{lem:3},
		\begin{align*}
			|\tilde{\eta}_{lj}-\hat{\eta}_{lj}|&\leq\frac{\lambda^{1/2}}{2}\frac{\sum_{k=1}^{K}|d_{lj}^{(k)}|}{\left(\sum_{k=1}^{K}|\hat{\theta}_{lj}(k)|-\sum_{k=1}^{K}|d_{lj}^{(k)}|\right)^{1/2}}\\
			&\leq\frac{\lambda^{1/2}}{2K^{1/2}}\frac{\delta^{*}}{\sqrt{a-\frac{a}{2}}}\leq\left(\frac{\lambda}{2Ka}\right)^{1/2}\delta^{*1/2}
		\end{align*}
		and
		\begin{align*}
			\sum_{k=1}^{K}|\tilde{\gamma}_{lj}^{(k)}-\hat{\gamma}_{lj}^{(k)}|\leq I_{1}+I_{2},
		\end{align*}
		where
		\begin{align*}
			I_{1}&=\lambda^{-1/2}\frac{\sum_{k=1}^{K}|d_{lj}^{(k)}|}{\left(\sum_{k=1}^{K}|\hat{\theta}_{lj}(k)+d_{lj}^{(k)}|\right)^{1/2}}\leq\lambda^{-1/2}\frac{\sum_{k=1}^{K}|d_{lj}^{(k)}|}{\left(\sum_{k=1}^{K}|\hat{\theta}_{lj}(k)|-\sum_{k=1}^{K}|d_{lj}^{(k)}|\right)^{1/2}}\\
			&\leq \lambda^{-1/2}\frac{\delta^{*}}{K^{1/2}\sqrt{a-\frac{a}{2}}}\leq\left(\frac{2}{\lambda K a}\right)^{1/2}\delta^{*1/2}
		\end{align*}
		and
		\begin{align*}
			I_{2}&=\lambda^{-1/2}\sum_{k=1}^{K}\left|\frac{\hat{\theta}_{lj}(k)}{\left(\sum_{k=1}^{K}|\hat{\theta}_{lj}(k)+d_{lj}^{(k)}|\right)^{1/2}}-\frac{\hat{\theta}_{lj}(k)}{\left(\sum_{k=1}^{K}|\hat{\theta}_{lj}(k)|\right)^{1/2}}\right|\\
			&=\lambda^{-1/2}\left(\sum_{k=1}^{K}|\hat{\theta}_{lj}(k)|\right)^{1/2}\frac{\left|\left(\sum_{k=1}^{K}|\hat{\theta}_{lj}(k)+d_{lj}^{(k)}|\right)^{1/2}-\left(\sum_{k=1}^{K}|\hat{\theta}_{lj}(k)|\right)^{1/2}\right|}{\left(\sum_{k=1}^{K}|\hat{\theta}_{lj}(k)+d_{lj}^{(k)}|\right)^{1/2}}\\
			&\leq\lambda^{-1/2}\frac{\left(\sum_{k=1}^{K}|\hat{\theta}_{lj}(k)|\right)^{1/2}}{\left(\sum_{k=1}^{K}|\hat{\theta}_{lj}(k)|-\sum_{k=1}^{K}|d_{lj}^{(k)}|\right)^{1/2}}\lambda^{-1/2}|\tilde{\eta}_{lj}-\hat{\eta}_{lj}|\\
			&\leq \lambda^{-1/2}\frac{b^{1/2}}{\sqrt{Ka-\frac{Ka}{2}}}\frac{1}{\sqrt{2Ka}}\delta^{*1/2}\leq\left(\frac{b}{\lambda K^{2} a^{2}}\right)^{1/2}\delta^{*1/2}.
		\end{align*}
		Therefore,
		\begin{align*}
			||\tilde{\mathbf{H}}-\hat{\mathbf{H}}||_{1}+\sum_{k=1}^{K}||\tilde{\mathbf{\Gamma}}^{(k)}-\hat{\mathbf{\Gamma}}^{(k)}||_{1}<\delta.
		\end{align*}
		Thus,
		\begin{align*}
			Q_{2}(\lambda,\hat{\mathbf{H}},\hat{\mathbf{\Gamma}})\leq Q_{2}(\lambda,\tilde{\mathbf{H}},\tilde{\mathbf{\Gamma}})\Rightarrow Q_{3}(\lambda,\hat{\mathbf{\Theta}})\leq Q_{3}(\lambda,\tilde{\mathbf{\Theta}}),
		\end{align*}
		which means that $\hat{\mathbf{\Theta}}$ is a local minimizer of \eqref{obj:3}. The other direction can be proved similarly.	
	\end{proof}

	\medskip
	\begin{large}
		\noindent\textbf{Proof of Proposition \ref{prop:1}}
	\end{large}

	\begin{proof}
		By the Woodbury identity,
		\begin{align*}
			(\mathds{X}_{-j}^{\intercal}\mathds{X}_{-j}+nb\mathbf{I}_{p-1})^{-1}=\frac{1}{nb}\mathbf{I}_{p-1}-\frac{1}{nb}\mathds{X}_{-j}^{\intercal}(\mathds{X}_{-j}\mathds{X}_{-j}^{\intercal}+nb\mathbf{I}_{n})^{-1}\mathds{X}_{-j}.
		\end{align*}
		By the Sherman-Morrison identity, we get
		\begin{align*}
			(\mathds{X}_{-j}\mathds{X}_{-j}^{\intercal}+nb\mathbf{I}_{n})^{-1}&=(\mathds{X}\mathds{X}^{\intercal}+nb\mathbf{I}_{n}-\mathds{X}_{j}\mathds{X}_{j}^{\intercal})^{-1}\\
			&=(\mathds{X}\mathds{X}^{\intercal}+nb\mathbf{I}_{n})^{-1}+\frac{(\mathds{X}\mathds{X}^{\intercal}+nb\mathbf{I}_{n})^{-1}\mathds{X}_{j}\mathds{X}_{j}^{\intercal}(\mathds{X}\mathds{X}^{\intercal}+nb\mathbf{I}_{n})^{-1}}{1-\mathds{X}_{j}^{\intercal}(\mathds{X}\mathds{X}^{\intercal}+nb\mathbf{I}_{n})^{-1}\mathds{X}_{j}}.
		\end{align*}
		Combining the two,we have the desired result.
	\end{proof}

	\medskip
	\noindent
		\begin{large}
			\noindent\textbf{Proof of Proposition \ref{prop:2}}
		\end{large}
		\begin{proof}
			The most computationally expensive part of \eqref{admmjns} is that of computing $\mathbf{v}^{t+1}$, which consists of two components. The first component is
			\begin{align*}
				(\mathds{Z}^{\intercal}\mathds{Z}+nb\mathbf{I}_{p(p-1)})^{-1}\mathds{Z}^{\intercal}\mathds{Y}=
				\left[
				\begin{array}{c}
					(\mathds{X}_{-1}^{\intercal}\mathds{X}_{-1}+nb\mathbf{I}_{p-1})^{-1}\mathds{X}_{-1}^{\intercal}\mathds{X}_{1}\\
					\vdots\\
					(\mathds{X}_{-p}^{\intercal}\mathds{X}_{-p}+nb\mathbf{I}_{p-1})^{-1}\mathds{X}_{-p}^{\intercal}\mathds{X}_{p}
				\end{array}
				\right].
			\end{align*}
			Using proposition \ref{prop:1}, it is easy to see that the computational complexity of
			\begin{align*}
				(\mathds{X}_{-j}^{\intercal}\mathds{X}_{-j}+nb\mathbf{I}_{p-1})^{-1}\mathds{X}_{-j}^{\intercal}\mathds{X}_{j}
			\end{align*}
			is $\mathcal{O}(np)$. Therefore, the computational complexity of
			\begin{align*}
				(\mathds{Z}^{\intercal}\mathds{Z}+nb\mathbf{I}_{p(p-1)})^{-1}\mathds{Z}^{\intercal}\mathds{Y}
			\end{align*}
			is $\mathcal{O}(np^{2})$. Similarly, the computational complexity of the second component
			\begin{align*}
				(\mathds{Z}^{\intercal}\mathds{Z}+nb\mathbf{I}_{p(p-1)})^{-1}nb(\mathbf{r}^{t}-\mathbf{u}^{t})
			\end{align*}
			is $\mathcal{O}(np^{2})$, which concludes the proof.
	\end{proof}

	\medskip
	\begin{lemma}\label{asymp:lem:1}
		Suppose $\bm{\epsilon}=(\epsilon_{1},\ldots,\epsilon_{n})^{\intercal}$ follows $\mathcal{N}_{n}(\mathbf{0},\sigma^{2}\mathbf{I}_{n})$. Then, there exists a constant $M>0$ such that
		\begin{align*}
			f(t)=\sup_{||\bm{a}||_{2}\leq 1}P\left(|\bm{a}^{\intercal}\bm{\epsilon}|>t\right)\leq 2\exp\left(-M t^{2}\right)
		\end{align*}
	\end{lemma}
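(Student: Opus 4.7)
The plan is to reduce the supremum to a single one-dimensional Gaussian tail bound. For any fixed $\bm{a}$ with $\|\bm{a}\|_2 \le 1$, the random variable $\bm{a}^{\intercal}\bm{\epsilon}$ is a linear combination of independent centered Gaussians, hence itself centered Gaussian with variance $\sigma^2 \|\bm{a}\|_2^2 \le \sigma^2$. So the problem collapses to controlling $P(|Z| > t)$ uniformly over $Z \sim \mathcal{N}(0, \tau^2)$ with $\tau^2 \le \sigma^2$.

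From there I would invoke the standard Gaussian tail inequality $P(|Z| > t) \le 2\exp(-t^2/(2\tau^2))$, which can be proved in one line by a Chernoff/Markov argument using the moment generating function $\mathrm{E}[e^{\lambda Z}] = \exp(\lambda^2 \tau^2 / 2)$ and then optimizing over $\lambda > 0$. Since $\tau^2 \le \sigma^2$, this bound is monotone in $\tau^2$ and we get $P(|Z| > t) \le 2\exp(-t^2/(2\sigma^2))$, uniformly in $\bm{a}$.

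Taking the supremum over $\|\bm{a}\|_2 \le 1$ on the left-hand side leaves the right-hand side unchanged, so the claim holds with the explicit constant $M = 1/(2\sigma^2)$. There is no substantive obstacle here; the only care needed is noting that the bound on the variance is uniform over the closed unit ball, so the same $M$ works for every admissible $\bm{a}$.
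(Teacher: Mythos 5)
Your proof is correct, and it takes a genuinely more elementary route than the paper. You exploit the fact that the errors are exactly Gaussian: any linear functional $\bm{a}^{\intercal}\bm{\epsilon}$ is itself $\mathcal{N}(0,\sigma^{2}\|\bm{a}\|_{2}^{2})$, so the standard Chernoff bound on the Gaussian tail gives $P(|\bm{a}^{\intercal}\bm{\epsilon}|>t)\leq 2\exp\left(-t^{2}/(2\sigma^{2})\right)$ uniformly over the unit ball, with the explicit constant $M=1/(2\sigma^{2})$. The paper instead follows the argument of Huang et al.\ and routes everything through the $\psi_{2}$-Orlicz norm: it bounds $\|\epsilon_{i}\|_{\psi_{2}}$ via Lemma 2.2.1 of van der Vaart and Wellner, controls $\|\bm{a}^{\intercal}\bm{\epsilon}\|_{\psi_{2}}$ by their Proposition A.1.6, and then converts the Orlicz-norm bound into a sub-Gaussian tail. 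That machinery buys generality --- it would survive if the $\epsilon_{i}$ were merely sub-Gaussian rather than exactly Gaussian, which matters in the setting of Huang et al.\ --- but at the price of a non-explicit constant $M$ depending on the universal constant $K_{2}$ from the maximal-inequality proposition. Since the lemma as stated assumes exact Gaussianity and only asserts existence of some $M>0$, your shorter argument fully suffices and in fact yields a sharper, explicit constant. The one point to be careful about, which you correctly flag, is that the variance bound $\sigma^{2}\|\bm{a}\|_{2}^{2}\leq\sigma^{2}$ is uniform over the closed unit ball, so the supremum can be taken without degrading the bound.
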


	\begin{proof}
		The proof of this result can be found in the supplementary material of \cite{huang2008adaptive}. However, for completeness we include it here. 
		
		For $d\in[1,+\infty)$, define $\psi_{d}(x)=\exp(x^{d})-1$. For any random variable $\mathbf{X}$, its $\psi_{d}$-Orlicz norm is defined as
		\begin{align*}
			||X||_{\psi_{d}}=\inf\left\{C\in(0,+\infty):\e\left[\psi_{d}\left(\frac{|X|}{C}\right)\right]\leq1\right\}.
		\end{align*}
		
		Since $\epsilon_{i}$ is Gaussian with mean zero and variance $\sigma^{2}$, by exercise 2.7 of \cite{boucheron2013concentration} we have
		\begin{align*}
			P(|\epsilon_{i}|>t)\leq\exp\left(-\frac{t^{2}}{2\sigma^{2}}\right).
		\end{align*}
		By Lemma 2.2.1 of \cite{van1996weak} we have $||\epsilon_{i}||_{\psi_{2}}\leq 2\sigma$. Let $\bm{a}\in\mathbb{R}^{n}$, satisfying $||\bm{a}||_{2}\leq 1$. By proposition A.1.6 of \cite{van1996weak}, there exists $K_{2}>0$ which only depends on $d$ such that
		\begin{align*}
			||\bm{a}^{\intercal}\bm{\epsilon}||_{\psi_{2}}\leq K_{2}\left[\e|\bm{a}^{\intercal}\bm{\epsilon}|+\left(\sum_{i=1}^{n}||a_{i}\epsilon_{i}||_{\psi_{2}}^{2}\right)^{1/2}\right].
		\end{align*}
		Since
		\begin{align*}
			\e(\bm{a}^{\intercal}\bm{\epsilon})=\e\left(\sum_{i=1}^{n}a_{i}^{2}\epsilon_{i}^{2}+\sum_{i\ne j}a_{i}a_{j}\epsilon_{i}\epsilon_{j}\right)=\sum_{i=1}^{n}a_{i}^{2}\sigma^{2}\leq \sigma^{2},
		\end{align*}
		by Jensen's inequality
		\begin{align*}
			\e|\bm{a}^{\intercal}\bm{\epsilon}|\leq \left[\e\left(\sum_{i=1}^{n}a_{i}\epsilon_{i}\right)^{2}\right]^{1/2}\leq \sigma.
		\end{align*}
		Concerning the second term,
		\begin{align*}
			\left(\sum_{i=1}^{n}||a_{i}\epsilon_{i}||_{\psi_{2}}^{2}\right)^{1/2}=\left(\sum_{i=1}^{n}|a_{i}|^{2}||\epsilon_{i}||_{\psi_{2}}^{2}\right)^{1/2}\leq 2\sigma||\bm{a}||_{2}\leq 2\sigma.
		\end{align*}
		Therefore, $||\bm{a}^{\intercal}\bm{\epsilon}^{(k)}||_{\psi_{2}}\leq K_{2}(\sigma^{2}+2\sigma)$.	By the definition of $||X||_{\psi_{2}}$ it is true that
		\begin{align*}
			\e\left[\exp\left(\frac{|X|^{2}}{||X||_{\psi_{2}}^{2}}\right)\right]\leq 2.
		\end{align*}
		Then, for $t>0$, with the help of Markov's inequality
		\begin{align*}
			P(|X|>t)=P\left(\exp\left\{\frac{|X|^{2}}{||X||_{\psi_{2}}^{2}}\right\}>\exp\left\{\frac{t^{2}}{||X||_{\psi_{2}}^{2}}\right\}\right)\leq 2\exp\left(\frac{t^{2}}{||X||_{\psi_{2}}^{2}}\right).
		\end{align*}
		This means that
		\begin{align*}
			P(|\bm{a}^{\intercal}\bm{\epsilon}|>t)\leq 2\exp\left(-\frac{t^{2}}{||\bm{a}^{\intercal}\bm{\epsilon}||_{\psi_{2}}^{2}}\right)\leq 2\exp\left(-\frac{t^{2}}{K_{2}^{2}(\sigma^{2}+2\sigma)^{2}}\right).
		\end{align*}
		Define $M=K_{2}^{-2}(\sigma^{2}+2\sigma)^{-2}$. Then
		\begin{align*}
			f(t)=\sup_{||\bm{a}||_{2}\leq 1}P(|\bm{a}^{\intercal}\bm{\epsilon}|>t)\leq 2\exp\left(-M t^{2}\right).
		\end{align*}
	\end{proof}

	\medskip
	\begin{lemma}\label{asymp:lem:2}
		If Assumption \ref{ass:2} is satisfied, then 
		\begin{align*}
			&P(||\mathbf{s}_{j}^{(k)}||_{2}\geq t)\leq q_{n}\exp\left\{-C\left(\frac{4t^{2}}{M_{1}q_{n}}-1\right)\right\},\quad\text{for}\quad t\geq\frac{(M_{1}q_{n})^{1/2}}{2},\\
			\text{and}\quad&P(\tau_{lj}^{-1}\geq t)\leq \exp\left\{-C\left(\frac{t^{2}}{4}-M_{2}\right)\right\},\quad\text{for}\quad t\geq 2 M_{2}^{1/2}.
		\end{align*}
	\end{lemma}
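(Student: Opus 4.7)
The plan is to derive both inequalities directly from the two tail hypotheses of Assumption \ref{ass:2}, after unpacking the explicit formula $\tau_{lj}=\tfrac{1}{2}\bigl(\sum_{k=1}^{K}|\tilde{\theta}_{lj}^{(k)}|\bigr)^{-1/2}$. The bound on $\tau_{lj}^{-1}$ follows from the \emph{difference} hypothesis, while the bound on $\|\mathbf{s}_j^{(k)}\|_2$ uses the \emph{ratio} hypothesis after a pigeonhole reduction that accounts for the (at most) $q_n$ coordinates of $\mathbf{s}_j^{(k)}$.

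For the second bound, I would rewrite $\{\tau_{lj}^{-1}\geq t\}=\bigl\{\sum_{k}|\tilde{\theta}_{lj}^{(k)}|\geq t^{2}/4\bigr\}$ and apply the triangle inequality together with the hypothesis $\sum_{k}|h_{lj}^{(k)}|\leq M_{2}$ to obtain
\begin{align*}
\Bigl|\sum_{k}|\tilde{\theta}_{lj}^{(k)}|-\sum_{k}|h_{lj}^{(k)}|\Bigr|\geq \frac{t^{2}}{4}-M_{2},
\end{align*}
a quantity that is nonnegative precisely when $t\geq 2M_{2}^{1/2}$. Dominating by the maximum over $(l,j)$ and applying the second tail inequality of Assumption \ref{ass:2} then yields the claimed $\exp\{-C(t^{2}/4-M_{2})\}$.

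For the first bound, since $\|\mathbf{s}_j^{(k)}\|_2^{2}=\sum_{l=1}^{q_{nj}^{(k)}}\tau_{lj}^{2}$ is a sum of at most $q_n$ nonnegative terms, a pigeonhole step replaces $\{\|\mathbf{s}_j^{(k)}\|_2\geq t\}$ by $\bigcup_{l\leq q_{nj}^{(k)}}\{\tau_{lj}^{2}\geq t^{2}/q_n\}$, costing a union-bound factor of $q_n$. Writing $\tau_{lj}^{-2}=4\sum_{k}|h_{lj}^{(k)}|\,A_{lj}^{-1}$ with $A_{lj}=\sum_{k}|h_{lj}^{(k)}|/\sum_{k}|\tilde{\theta}_{lj}^{(k)}|$ and using $\sum_{k}|h_{lj}^{(k)}|\geq 1/M_{1}$, the event $\{\tau_{lj}^{2}\geq t^{2}/q_n\}$ is implied by $\{A_{lj}\geq 4t^{2}/(M_{1}q_n)\}$; centering at $1$, dominating by the max over $(l,j)$, and invoking the first tail inequality of Assumption \ref{ass:2} then delivers $\exp\{-C(4t^{2}/(M_{1}q_n)-1)\}$, which requires $t\geq(M_{1}q_n)^{1/2}/2$ so that the exponent is nonnegative. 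Since no genuine analytical difficulty appears, the only care points are the reversal of inequality directions forced by the random quantity sitting in the denominator (so that ``denominator small'' events must be recast as ``ratio large'' events), and the matching of the threshold conditions $t\geq 2M_{2}^{1/2}$ and $t\geq(M_{1}q_n)^{1/2}/2$ with the positivity requirements on the two exponents.
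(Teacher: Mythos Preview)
Your proposal is correct and follows essentially the same route as the paper: for $\tau_{lj}^{-1}$ you pass to the difference $\sum_k|\tilde\theta_{lj}^{(k)}|-\sum_k|h_{lj}^{(k)}|$ and invoke the second tail hypothesis, and for $\|\mathbf{s}_j^{(k)}\|_2$ you use a pigeonhole/union bound over the $q_{nj}^{(k)}\le q_n$ coordinates, rewrite $\tau_{lj}^2$ via the ratio $A_{lj}=\sum_k|h_{lj}^{(k)}|/\sum_k|\tilde\theta_{lj}^{(k)}|$, use $\sum_k|h_{lj}^{(k)}|\ge 1/M_1$, center at $1$, and apply the first tail hypothesis---exactly as the paper does. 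One small slip: you wrote that $\{\tau_{lj}^2\ge t^2/q_n\}$ ``is implied by'' $\{A_{lj}\ge 4t^2/(M_1q_n)\}$, but you need the opposite containment (the $\tau$-event \emph{implies} the $A$-event, via $\sum_k|h_{lj}^{(k)}|\ge 1/M_1$) so that the probability inequality goes the right way.
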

	\begin{proof}
		By the definition of $\mathbf{s}_{j}^{(k)}$,	
		\begin{align*}
			P(||\mathbf{s}_{j}^{(k)}||_{2}\geq t)&=P\left(\sum_{l=1}^{q_{nj}^{(k)}}\frac{1}{\sum_{k=1}^{K}|\tilde{\theta}_{lj}^{(k)}|}\geq 4t^{2}\right) \leq\sum_{l=1}^{q_{nj}^{(k)}}P\left(\frac{1}{\sum_{k=1}^{K}|h_{lj}^{(k)}|}\frac{\sum_{k=1}^{K}|h_{lj}^{(k)}|}{\sum_{k=1}^{K}|\tilde{\theta}_{lj}^{(k)}|}\geq \frac{4t^{2}}{q_{nj}^{(k)}}\right)\\
			&\leq \sum_{l=1}^{q_{nj}^{(k)}}P\left(M_{1}\frac{\sum_{k=1}^{K}|h_{lj}^{(k)}|}{\sum_{k=1}^{K}|\tilde{\theta}_{lj}^{(k)}|}\geq \frac{4t^{2}}{q_{n}}\right)=\sum_{l=1}^{q_{nj}^{(k)}}P\left(\frac{\sum_{k=1}^{K}|h_{lj}^{(k)}|}{\sum_{k=1}^{K}|\tilde{\theta}_{lj}^{(k)}|}-1\geq\frac{4t^{2}}{M_{1}q_{n}}-1\right)\\
			&\leq q_{n} P\left(\max_{\substack{1\leq j\leq p_{n} \\ 1\leq l\leq p_{n}-1}}\left|\frac{\sum_{k=1}^{K}|h_{lj}^{(k)}|}{\sum_{k=1}^{K}|\tilde{\theta}_{lj}^{(k)}|}-1\right|\geq\frac{4t^{2}}{M_{1}q_{n}}-1\right)\leq q_{n}\exp\left\{-C\left(\frac{4t^{2}}{M_{1}q_{n}}-1\right)\right\},
		\end{align*}
		for $t>(M_{1}q_{n})^{1/2}/2$.
		
		Also, by the definition of $\tau_{lj}$,
		\begin{align*}
			P(\tau_{lj}^{-1}\geq t)&=P(\tau_{lj}^{-2}\geq t^{2})=P\left(4\sum_{k=1}^{K}|\tilde{\theta}_{lj}^{(k)}|\geq t^{2}\right)\\
			&=P\left(\sum_{k=1}^{K}|\tilde{\theta}_{lj}^{(k)}|-\sum_{k=1}^{K}|h_{lj}^{(k)}|\geq \frac{t^{2}}{4}-\sum_{k=1}^{K}|h_{lj}^{(k)}|\right)\\
			&\leq P\left(\max_{\substack{1\leq j\leq p_{n} \\ 1\leq l\leq p_{n}-1}}\left|\sum_{k=1}^{K}|\tilde{\theta}_{lj}^{(k)}|-\sum_{k=1}^{K}|h_{lj}^{(k)}|\right| \geq \frac{t^{2}}{4}-M_{2}\right)\\
			&\leq\exp\left\{-C\left(\frac{t^{2}}{4}-M_{2}\right)\right\},
		\end{align*}
		for $t\geq 2M_{2}^{1/2}$.
	\end{proof}

	\medskip
	\begin{large}
		\noindent\textbf{Proof of Theorem \ref{thm:3}}
	\end{large}
	\begin{proof}
		Since $k$ does not depend on $n$ and the proof is the same for all $k$, we omit $k$ from this proof. Let $\hat{\bm{\theta}}_{j}$ be the unique solution of
		\begin{align}\label{obj:6}
			\argmin_{\bm{\theta}_{j}\in\mathbb{R}^{p_{n}-1}}\left(\frac{1}{2n}||\mathds{X}_{j}-\mathds{X}_{-j}\bm{\theta}_{j}||_{2}^{2}+\lambda_{n}\sum_{l=1}^{p_{n}-1}\tau_{lj}|\theta_{lj}|\right).
		\end{align}	
		Note that
		\begin{align*}
			\left\{\hat{\mathcal{E}}=\mathcal{E}\right\}\supseteq\bigcap_{j=1}^{p_{n}}\left\{\hat{\bm{\theta}}_{j}=_{s}\bm{\theta}_{j}\right\}.
		\end{align*}
		Therefore,
		\begin{align*}
			P\left(\hat{\mathcal{E}}\ne \mathcal{E}\right)\leq P\left(\bigcup_{j=1}^{p_{n}}\left\{\hat{\bm{\theta}}_{j}\ne_{s}\bm{\theta}_{j}\right\}\right)\leq\sum_{j=1}^{p_{n}}P\left(\hat{\bm{\theta}}_{j}\ne_{s}\bm{\theta}_{j}\right).
		\end{align*}
		Let $\bm{\theta}_{j}=\left(\bm{\theta}_{j;1}^{\intercal},\bm{\theta}_{j;2}^{\intercal}\right)^{\intercal}$ and $\hat{\bm{\theta}}_{j}=\left(\hat{\bm{\theta}}_{j;1}^{\intercal},\hat{\bm{\theta}}_{j;2}^{\intercal}\right)^{\intercal}$, where $\bm{\theta}_{j;1},\hat{\bm{\theta}}_{j;1}\in\mathbb{R}^{q_{nj}}$ and $\bm{\theta}_{j;2},\hat{\bm{\theta}}_{j;2}\in\mathbb{R}^{s_{nj}}$. Denote the objective function in \eqref{obj:6} by $L(\bm{\theta}_{j})$, and $\partial L(\bm{\theta}_{j})/\partial{\theta_{lj}}$ its subdifferential with respect to the $l$-th component $\theta_{lj}$ \citep{clarke1990optimization}.	For $l=1,\ldots,p_{n}-1$, 
		the KKT conditions \citep{kuhn2014nonlinear} are given by
		\begin{align*}
			0\in\frac{\partial L(\hat{\bm{\theta}}_{j})}{\partial\theta_{lj}}\Leftrightarrow
			\begin{cases}
				(\mathds{X}_{-j})_{l}^{\intercal}(\mathds{X}_{j}-\mathds{X}_{-j}\hat{\bm{\theta}}_{j})=n\lambda_{n}\tau_{lj}\sgn(\hat{\theta}_{lj}),&\hat{\theta}_{lj}\ne 0\\
				\left|(\mathds{X}_{-j})_{l}^{\intercal}(\mathds{X}_{j}-\mathds{X}_{-j}\hat{\bm{\theta}}_{j})\right|\leq n\lambda_{n}\tau_{lj},&\hat{\theta}_{lj}=0
			\end{cases}
			.
		\end{align*}
		Thus
		\begin{align*}
			\hat{\bm{\theta}}_{j}=_{s}\bm{\theta}_{j}\Rightarrow
			\begin{cases}
				\mathds{X}_{-j;1}^{\intercal}[\mathds{X}_{j}-\mathds{X}_{-j;1}\hat{\bm{\theta}}_{j;1}]=n\lambda_{n}\mathbf{s}_{j}\\
				\left|(\mathds{X}_{-j})_{l}^{\intercal}[\mathds{X}_{j}-\mathds{X}_{-j;1}\hat{\bm{\theta}}_{j;1}]\right|\leq n\lambda_{n}\tau_{lj},&l=q_{nj}+1,\ldots,p_{n}-1
			\end{cases}
			.
		\end{align*}
		Define the matrix
		\begin{align*}
			\hat{\mathbf{H}}_{jj}=\mathbf{I}_{n}-n^{-1}\mathds{X}_{-j;1}\hat{\mathbf{\Sigma}}_{jj}^{-1}\mathds{X}_{-j;1}^{\intercal},
		\end{align*}
		and let $\bm{\epsilon}=(\epsilon_{1},\ldots,\epsilon_{n})^{\intercal}$ be the vector of errors mentioned in section 2. Then,
		\begin{align*}
			\mathds{X}_{j}&=\mathds{X}_{-j;1}\bm{\theta}_{j;1}+\bm{\epsilon}\\
			\Rightarrow \mathds{X}_{-j;1}^{\intercal}\mathds{X}_{j}&=\mathds{X}_{-j;1}^{\intercal}\mathds{X}_{-j;1}\bm{\theta}_{j;1}+\mathds{X}_{-j;1}^{\intercal}\bm{\epsilon},
		\end{align*}
		and
		\begin{align*}
			\hat{\bm{\theta}}_{j}=_{s}\bm{\theta}_{j}\Rightarrow
			\begin{cases}
				\hat{\bm{\theta}}_{j;1}=\bm{\theta}_{j;1}+\frac{1}{n}\hat{\mathbf{\Sigma}}_{jj}^{-1}\left(\mathds{X}_{-j;1}^{\intercal}\bm{\epsilon}-n\lambda_{n}\mathbf{s}_{j}\right)\\
				\left|(\mathds{X}_{-j})_{l}^{\intercal}\left[\hat{\mathbf{H}}_{jj}\bm{\epsilon}+\lambda_{n}\mathds{X}_{-j;1}\hat{\mathbf{\Sigma}}_{jj}^{-1}\mathbf{s}_{j}\right]\right|\leq n\lambda_{n}\tau_{lj},&l=q_{nj}+1,\ldots,p_{n}-1
			\end{cases}
			.
		\end{align*}
		Let $\mathbf{e}_{l}$ be the unit vector in the direction of the $l$-th coordinate. Then,
		\begin{align*}
			\hat{\theta}_{lj}=\theta_{lj}+\frac{1}{n}\mathbf{e}_{l}^{\intercal}\hat{\mathbf{\Sigma}}_{jj}^{-1}\left(\mathds{X}_{-j;1}^{\intercal}\bm{\epsilon}-n\lambda_{n}\mathbf{s}_{j}\right).
		\end{align*}
		For $l=1,\ldots,q_{nj}$, a necessary condition for $\sgn(\hat{\theta}_{lj})\ne \sgn(\theta_{lj})$ is
		\begin{align*}
			|\theta_{lj}|<\frac{1}{n}\left|\mathbf{e}_{l}^{\intercal}\hat{\mathbf{\Sigma}}_{jj}^{-1}\left(\mathds{X}_{-j;1}^{\intercal}\bm{\epsilon}-n\lambda_{n}\mathbf{s}_{j}\right)\right|.
		\end{align*}
		For $l=q_{nj}+1,\ldots,p_{n}-1$, a necessary condition for $\hat{\theta}_{lj}\ne 0$ is
		\begin{align*}
			\left|(\mathds{X}_{-j})_{l}^{\intercal}\left(\hat{\mathbf{H}}_{jj}\bm{\epsilon}+\lambda_{n}\mathds{X}_{-j;1}\hat{\mathbf{\Sigma}}_{jj}^{-1}\mathbf{s}_{j}\right)\right|> n\lambda_{n}\tau_{lj}.
		\end{align*}
		Therefore,
		\begin{align*}
			P(\hat{\bm{\theta}}_{j}\ne_{s} \bm{\theta}_{j})&\leq P\left(\bigcup_{l=1}^{q_{nj}}\left\{\frac{1}{n}\left|\mathbf{e}_{l}^{\intercal}\hat{\mathbf{\Sigma}}_{jj}^{-1}\left(\mathds{X}_{-j;1}^{\intercal}\bm{\epsilon}-n\lambda_{n}\mathbf{s}_{j}\right)\right|>|\theta_{lj}|\right\}\right)\\
			&+P\left(\bigcup_{l=q_{nj}+1}^{p_{n}-1}\left\{\left|(\mathds{X}_{-j})_{l}^{\intercal}\left(\hat{\mathbf{H}}_{jj}\bm{\epsilon}+\lambda_{n}\mathds{X}_{-j;1}\hat{\mathbf{\Sigma}}_{jj}^{-1}\mathbf{s}_{j}\right)\right|> n\lambda_{n}\tau_{lj}\right\}\right)\\
			&\leq P(B_{1})+P(B_{2})+P(B_{3})+P(B_{4}),
		\end{align*}
		where
		\begin{align*}
			P(B_{1})&=P\left(\bigcup_{l=1}^{q_{nj}}\left\{\frac{1}{n}\left|\mathbf{e}_{l}^{\intercal}\hat{\mathbf{\Sigma}}_{jj}^{-1}\mathds{X}_{-j;1}^{\intercal}\bm{\epsilon}\right|>\frac{|\theta_{lj}|}{2}\right\}\right)\\
			P(B_{2})&=P\left(\bigcup_{l=1}^{q_{nj}}\left\{\lambda_{n}\left|\mathbf{e}_{l}^{\intercal}\hat{\mathbf{\Sigma}}_{jj}^{-1}\mathbf{s}_{j}\right|>\frac{|\theta_{lj}|}{2}\right\}\right)\\
			P(B_{3})&=P\left(\bigcup_{l=q_{nj}+1}^{p_{n}-1}\left\{\left|(\mathds{X}_{-j})_{l}^{\intercal}\hat{\mathbf{H}}_{jj}\bm{\epsilon}\right|>\frac{n\lambda_{n}\tau_{lj}}{2}\right\}\right)\\
			P(B_{4})&=P\left(\bigcup_{l=q_{nj}+1}^{p_{n}-1}\left\{\left|(\mathds{X}_{-j})_{l}^{\intercal}\mathds{X}_{-j;1}\hat{\mathbf{\Sigma}}_{jj}^{-1}\mathbf{s}_{j}\right|>\frac{n\tau_{lj}}{2}\right\}\right)
		\end{align*}
		
		\bigskip
		Concerning $P(B_{1})$, using Assumption \ref{ass:1} we get
		\begin{align*}
			\frac{1}{n^{2}}\left|\left|\mathbf{e}_{l}^{\intercal}\hat{\mathbf{\Sigma}}_{jj}^{-1}\mathds{X}_{-j;1}^{\intercal}\right|\right|_{2}^{2}=\frac{1}{n}\mathbf{e}_{l}^{\intercal}\hat{\mathbf{\Sigma}}_{jj}^{-1}\mathbf{e}_{l}\leq \frac{1}{n}\upsilon_{nj}^{-1}\leq \frac{1}{n}\xi^{-1}.
		\end{align*}
		Therefore,
		\begin{align*}
			\left|\left|\left(\frac{\xi}{n}\right)^{1/2}\mathbf{e}_{l}^{\intercal}\hat{\mathbf{\Sigma}}_{jj}^{-1}\mathds{X}_{-j;1}^{\intercal}\right|\right|_{2}\leq 1,
		\end{align*}
		and with the help of Lemma \ref{asymp:lem:1},
		\begin{align*}
			P(B_{1})&\leq\sum_{l=1}^{q_{nj}}P\left(\left|\left(\frac{\xi}{n}\right)^{1/2}\mathbf{e}_{l}^{\intercal}\hat{\mathbf{\Sigma}}_{jj}^{-1}\mathds{X}_{-j;1}^{\intercal}\bm{\epsilon}\right|\geq \frac{(n\xi)^{1/2}}{2}|\theta_{lj}| \right)\\
			&\leq \sum_{l=1}^{q_{nj}}P\left(\left|\left(\frac{\xi}{n}\right)^{1/2}\mathbf{e}_{l}^{\intercal}\hat{\mathbf{\Sigma}}_{jj}^{-1}\mathds{X}_{-j;1}^{\intercal}\bm{\epsilon}\right|\geq \frac{(n\xi)^{1/2}}{2}b_{n}\right)\\
			&\leq \sum_{l=1}^{q_{nj}}f(\sqrt{n\xi}b_{n}/2)\leq q_{n}f(\sqrt{n\xi}b_{n}/2)\\
			&\leq q_{n}\exp\left\{-\frac{M \xi}{4}nb_{n}^{2}\right\}.
		\end{align*}

		\bigskip
		Concerning $P(B_{2})$, by the Cauchy-Schwarz inequality,
		\begin{align*}
			\lambda_{n}\left|\mathbf{e}_{l}^{\intercal}\hat{\mathbf{\Sigma}}_{jj}^{-1}\mathbf{s}_{j}\right|\leq \lambda_{n}\left|\left|\mathbf{e}_{l}^{\intercal}\hat{\mathbf{\Sigma}}_{jj}^{-1}\right|\right|_{2}||\mathbf{s}_{j}||_{2}\leq \frac{\lambda_{n}||\mathbf{s}_{j}||_{2}}{\xi}.
		\end{align*}
		Thus, by Lemma \ref{asymp:lem:2}, with $t=\xi b_{n}/2\lambda_{n}$,
		\begin{align*}
			&P(B_{2})\leq P\left(||\mathbf{s}_{j}||_{2}\geq \frac{\xi}{2}\frac{b_{n}}{\lambda_{n}}\right)\leq q_{n}\exp\left\{-C\left(\frac{\xi^{2}}{M_{1}}\frac{b_{n}^{2}}{\lambda_{n}^{2}q_{n}}-1\right)\right\}.
		\end{align*}
		Note that by Assumption \ref{ass:4}, the condition
		\begin{align*}
			\frac{\xi b_{n}}{2\lambda_{n}}>\frac{(M_{1}q_{n})^{1/2}}{2},
		\end{align*}
		is satisfied for sufficiently large $n$.

		\bigskip
		Concerning $P(B_{3})$, since $\hat{\mathbf{H}}_{jj}$ is a projection, its operator norm is bounded by 1. Thus, by \eqref{centerscale},
		\begin{align*}
			||(\mathds{X}_{-j})_{l}^{\intercal}\hat{\mathbf{H}}_{jj}||_{2}\leq ||(\mathds{X}_{-j})_{l}||_{2}=n^{1/2}\Rightarrow ||n^{-1/2}(\mathds{X}_{-j})_{l}^{\intercal}\hat{\mathbf{H}}_{jj}||_{2}\leq 1.
		\end{align*}
		Therefore, with the help of Lemma \ref{asymp:lem:1} and Lemma \ref{asymp:lem:2}, for sufficiently large $n$, we have
		\begin{align*}
			P(B_{3})&\leq\sum_{l=q_{nj}+1}^{p_{n}-1}P\left(\left|(\mathds{X}_{-j})_{l}^{\intercal}\hat{\mathbf{H}}_{jj}\bm{\epsilon}\right|>\frac{n\lambda_{n}\tau_{lj}}{2}\right)\\
			&\leq \sum_{l=q_{nj}+1}^{p_{n}-1}P(\tau_{lj}^{-1}\geq n^{1/2})+\sum_{l=q_{nj}+1}^{p_{n}-1}P\left(\left|(\mathds{X}_{-j})_{l}^{\intercal}\hat{\mathbf{H}}_{jj}\bm{\epsilon}\right|\geq\frac{n^{1/2}\lambda_{n}}{2}\right)\\
			&= \sum_{l=q_{nj}+1}^{p_{n}-1}P(\tau_{lj}^{-1}\geq n^{1/2})+\sum_{l=q_{nj}+1}^{p_{n}-1}P\left(\left|\frac{1}{n^{1/2}}(\mathds{X}_{-j})_{l}^{\intercal}\hat{\mathbf{H}}_{jj}\bm{\epsilon}\right|\geq\frac{\lambda_{n}}{2}\right)\\
			&\leq s_{n}\exp\left\{-C\left(\frac{n}{4}-M_{2}\right)\right\}+s_{n}\exp\left\{-\frac{M}{4}\lambda_{n}^{2}\right\}.
		\end{align*}

		\bigskip
		Concerning $P(B_{4})$, with the help of Lemma \ref{asymp:lem:2} and Assumption \ref{ass:3}, for sufficiently large $n$, we have,
		\begin{align*}
			P(B_{4})&\leq \sum_{l=q_{nj}+1}^{p_{n}-1}P(\tau_{lj}^{-1}\geq n^{1/2})+\sum_{l=q_{nj}+1}^{p_{n}-1}P\left(\left|(\mathds{X}_{-j})_{l}^{\intercal}\mathds{X}_{-j;1}\hat{\mathbf{\Sigma}}_{jj}^{-1}\mathbf{s}_{j}\right|\geq\frac{n^{1/2}}{2}\right)\\
			&\leq s_{n}\exp\left\{-C\left(\frac{n}{4}-M_{2}\right)\right\}+s_{n}\exp\left\{-\frac{C}{4}n\right\}.
		\end{align*}
		
		\bigskip
		Combining the upper bounds for $P(B_{1}),P(B_{2}),P(B_{3}),P(B_{4})$, we have, for sufficiently large $n$,
			\begin{align*}
				\sum_{j=1}^{p_{n}}P\left(\hat{\bm{\theta}}_{j}\ne_{s}\bm{\theta}_{j}\right)&\leq \exp\left\{\log(p_{n})\left[2-\frac{M\xi}{4}\frac{nb_{n}^{2}}{\log p_{n}}\right]\right\}\\
				&+\exp\left\{\log(p_{n})\left[2-\frac{C\xi^{2}}{M_{1}}\frac{b_{n}^{2}}{\lambda_{n}^{2}q_{n}\log p_{n}}+\frac{C}{\log p_{n}}\right]\right\}\\
				&+\exp\left\{\log(p_{n})\left[2-\frac{M}{4}\frac{\lambda_{n}^{2}}{\log p_{n}}\right]\right\}\\
				&+\exp\left\{\log(p_{n})\left[2-\frac{C}{4}\frac{n}{\log p_{n}}\right]\right\}\\
				&+2\exp\left\{\log(p_{n})\left[2-\frac{C}{4}\frac{n}{\log p_{n}}+\frac{CM_{2}}{\log p_{n}}\right]\right\},
			\end{align*}
			where the right-hand side converges to 0 by Assumption \ref{ass:4}.
	\end{proof}

	\newpage
	\nocite{*}
	\bibliographystyle{asa}
	\bibliography{mybibliography}	
\end{document}